\newtheorem{theorem}{Theorem}
\newtheorem{lemma}{Lemma}
\newtheorem{proposition}{Proposition}
\newtheorem{definition}{Definition}
\newcounter{remark}
\newenvironment{remark}
{\begin{quote}\textsc{Remark} \stepcounter{remark} \arabic{remark}:}
{\end{quote}}
\newcounter{example}
\newenvironment{proof}{\medskip\noindent \bf Proof: \rm}{\hspace*{\fill}
$\blacksquare$ \newline \medskip}
\newcommand{\Q}{\mathcal{Q}}
\begin{document}

\title{Bipolarization of posets and natural interpolation\thanks{A short and preliminary
    version of this paper has been presented at the RIMS Symposium  on
    Information and mathematics of nonadditivity and nonextensivity, Kyoto
    University, Sept. 2006 \cite{grla06}.}}

\author{
  Michel GRABISCH\thanks{Corresponding author. Tel (+33) 1-44-07-82-85, Fax
    (+33) 1-44-07-83-01, 
    email \texttt{michel.grabisch@univ-paris1.fr}}\\
    Universit\'e de Paris I  -- Panth\'eon-Sorbonne \\
    Centre d'Economie de la Sorbonne\\ 106-112 Bd. de l'H\^opital, 75013 Paris,
  France \and 
    Christophe LABREUCHE\\
    Thales Research \& Technology \\
    RD 128, 91767 Palaiseau Cedex, France }

\date{Version of \today}

\maketitle

\begin{abstract}
The Choquet integral w.r.t. a capacity can be seen in the finite case as a
parsimonious linear interpolator between vertices of $[0,1]^n$. We take this
basic fact as a starting point to define the Choquet integral in a very general
way, using the geometric realization of lattices and their natural
triangulation, as in the work of Koshevoy. 

A second aim of the paper is to define a general mechanism for the
bipolarization of ordered structures. Bisets (or signed sets), as well as
bisubmodular functions, bicapacities, bicooperative games, as well as the
Choquet integral defined for them can be seen as
particular instances of this scheme. 

Lastly, an application to multicriteria aggregation with multiple reference
levels illustrates all the results presented in the paper. 

\vspace*{2cm} {\bf Keywords:} interpolation, Choquet integral, lattice, bipolar
structure
\end{abstract}

\newpage

\section{Introduction}\label{sec:intro}
Capacities and the Choquet integral \cite{cho53} have become fundamental
concepts in decision making (see, e.g., the works of Schmeidler \cite{sch86},
Murofushi and Sugeno \cite{musu91a}, and Koshevoy \cite{kos98}). 

An interesting but not so well known fact is that in the finite case, the
Choquet integral can be obtained as a parsimonious linear interpolation,
supposing that values on the vertices of the hypercube $[0,1]^n$ are known. The
interpolation formula was discovered by Lov\'asz \cite{lov83}, considering the
problem of extending the domain of pseudo-Boolean functions to
$\mathbb{R}^n$. Later, Marichal \cite{mar02} remarked that this formula was
precisely the Choquet integral (see also Grabisch \cite{gra03b}). 

The idea of considering the Choquet integral as a parsimonious linear interpolator
could serve as a basic principle for extending the notion of Choquet integral to
more general frameworks. An example of this has been done by the authors in
\cite{grla03b}, considering multiple reference levels in a context of
multicriteria aggregation. 

Another remarkable example of generalization of the Choquet integral is the one
for bicapacities, proposed by the authors \cite{grla03d,grla04}. As this paper
will make it clear, bicapacities are an example of concept based on the of
\emph{bipolarization} of a partially ordered set, in this case Boolean
lattices. Specifically, take a finite set $N$ and the set $2^N$ of all its
subsets ordered by inclusion: we obtain a Boolean lattice, and a capacity is an
isotone real-valued mapping on $2^N$. Introducing $\mathcal{Q}(N):=\{(A,B)\in
2^N\times 2^N\mid A\cap B=\varnothing\}$, a bicapacity is a real-valued mapping
on $\mathcal{Q}(N)$, satisfying some monotonicity condition. Observe that
$\mathcal{Q}(N)$ could be denoted by $3^N$ as well: $(A,B)\in\mathcal{Q}(N)$ can
be considered as a function $\xi$ of $\{-1,0,1\}^N$, where $\xi(i)=1$ if $i\in
A$, $\xi(i)=-1$ if $i\in B$, and 0 otherwise. Then the term ``bipolarization''
becomes clear, since $2^N\equiv \{0,1\}^N$ has one ``pole'' (the value 1, and 0
is the origin or neutral value), and $\{-1,0,1\}^N$ has 2 poles, namely $-1$ and
1, around the neutral value 0.

The set $3^N$ and functions defined on it are not new in the literature. To the
knowledge of the authors, it has been introduced approximately at the same time
and independently by Chandrasekaran and Kabadi \cite{chka88}, Bouchet
\cite{bou87}, Qi \cite{qi88}, and Nakamura \cite{nak88} in the field of matroid
theory and optimization, and later well developed by Ando and Fujishige
\cite{anfu96}. They use the term \emph{biset} or \emph{signed sets} for elements
of $3^N$, and \emph{bisubmodular functions} for bicapacities (with some more
restrictions).  In the field of cooperative game theory, Bilbao has introduced
\emph{bicooperative games} \cite{bil00}, which corresponds to bicapacities
without the monotonicity condition. Other remarkable works on bicapacities and
bicooperative games include the one of Fujimoto, who defined the M\"obius
transform of bicapacities under the name of \emph{bipolar M\"obius
transform}  \cite{fuj05}.

The aim of this paper is twofold: First to define the Choquet integral in a very
general way, as a parsimonious linear interpolation. This is done through the
concept of geometric realization of a distributive lattice and its natural
triangulation. Second, to provide a
general mechanism for the bipolarization of a poset, and to extend the previous
concepts (geometric realization, Choquet integral, etc.) to the bipolarized
structure. Then, all concepts around bisets, bicapacities, etc., are recovered as
a particular case.

Our work has been essentially inspired and motivated by Koshevoy, who used the
geometric realization of a lattice and its natural triangulation \cite{kos98},
and by Fujimoto \cite{fuj05}, who first remarked the inadequacy of our original
definition of the M\"obius transform for bicapacities in \cite{grla03d}, and
proposed the bipolar M\"obius transform. 

Section \ref{sec:prel} introduces necessary material, in particular geometric
realizations, natural triangulations and interpolation. Section \ref{sec:bipo}
is the core section of the paper, which presents the concept of bipolarization,
then the bipolar version of the geometric realization, natural triangulations
and interpolation. Lastly, Section \ref{sec:kary} gives some examples, and
develops the particular case of the product of linear lattices, which
corresponds to an application in multicriteria aggregation with reference
levels. We show that results obtained previously by the authors in
\cite{grla03b} are recovered.

\section{Preliminaries}\label{sec:prel}
In this section, we consider a finite index set $N:=\{1,\ldots,n\}$.

\subsection{Capacities and bicapacities}\label{sec:cabica}
We recall from Rota \cite{rot64} that, given a locally finite poset $(X,\leq)$
with bottom element, the \emph{M\"obius function} is the function $\mu:X\times
X\rightarrow \mathbb{R}$ which gives the solution to any equation of the form 
\begin{equation}
\label{eq:mob1}
g(x) = \sum_{y\leq x} f(y),
\end{equation}
for some real-valued functions $f,g$ on $X$, by
\begin{equation}
\label{eq:mob2}
f(x) = \sum_{y\leq x}\mu(y,x)g(y).
\end{equation}
Function $f$ is called the \emph{M\"obius transform} (or \emph{inverse}) of $g$.

\begin{definition}
  \begin{enumerate}
  \item A function $\nu:2^N\rightarrow\mathbb{R}$ is a \emph{game} if it
    satisfies $\nu(\varnothing)=0$.  
  \item A game which satisfies $\nu(A)\leq\nu(B)$  whenever $A\subseteq B$
    (monotonicity) is called a \emph{capacity} \cite{cho53} or \emph{fuzzy
      measure} \cite{sug74}. The
    capacity is \emph{normalized} if in addition $\nu(N)=1$.
  \end{enumerate}
\end{definition}
\emph{Unanimity games} are capacities of the type
\[
u_A(B):=\begin{cases}
  1, & \text{ if } B\supseteq A\\
  0, & \text{ else}
  \end{cases}
\]
for some $A\subseteq N, A\neq\varnothing$. It is well known that the set of
unanimity games is a basis for all games, whose coordinates in this basis are
exactly the M\"obius transform of the game.
\begin{definition}\label{def:cho}
Let us consider $f:N\rightarrow \mathbb{R}_+$. The
\emph{Choquet integral} of $f$ w.r.t.  a capacity $\nu$ is given by
\[
\int f\,d\nu:=\sum_{i=1}^n [f(\pi(i)) - f(\pi(i+1))]\nu(\{\pi(1),\ldots,\pi(i)\}),
\] 
where $\pi$ is a permutation on $N$ such that
$f(\pi(1))\geq\cdots\geq f(\pi(n))$, and $f(\pi(n+1)):=0$. 
\end{definition}
The above definition is valid if $\nu$ is a game. For any $\{0,1\}$-valued capacity
$\nu$ on $2^N$ we have (see, e.g., \cite{musu93}):
\begin{equation}\label{eq:musu}
\int f\,d\nu=\bigvee_{A\mid \nu(A)=1}\bigwedge_{i\in A}f(i).
\end{equation}
The expression of the Choquet
integral w.r.t. the M\"obius transform of $\nu$ (denoted by $m$) is
\begin{equation}\label{eq:chom}
\int f\, d\nu = \sum_{A\subseteq N}m(A)\bigwedge_{i\in A} f(i).
\end{equation}

We introduce $\Q(N):=\{(A,B)\in 2^N\times 2^N\mid A\cap B=\varnothing\}$. 
\begin{definition}
\begin{enumerate}
\item A mapping $v:\Q(N)\rightarrow\mathbb{R}$ such that
  $v(\varnothing,\varnothing)=0$ is called a \emph{bicooperative game} \cite{bil00}. 
\item A bicooperative game $v$ such that $v(A,B)\leq v(C,D)$ whenever
  $(A,B),(C,D)\in\Q(N)$ with $A\subseteq C$ and $B\supseteq D$ (monotonicity) is
  called a \emph{bicapacity}\cite{grla02a,grla03d}. Moreover, a bicapacity is
  normalized if in addition $v(N,\varnothing)=1$ and $v(\varnothing,N)=-1$.
\end{enumerate}
\end{definition}

\begin{definition}
\label{def:bicho}
Let $v$ be a bicapacity and $f$ be a real-valued function on $N$. The
\emph{(general) Choquet integral} of $f$ w.r.t $v$ is given by
\[
\int f\,dv := \int|f|\,d\nu_{N^+_f}
\] 
where $\nu_{N^+_f}$ is a game on $N$ defined by 
\[
\nu_{N^+_f}(C) := v(C\cap N^+_f, C\cap N^-_f), \quad \forall C\subseteq N
\]
and $N^+_f:=\{i\in N\mid f(i)\geq 0\}$, $N^-_f=N\setminus N^+_f$. 
\end{definition}
 Note that the definition remains valid if $v$ is a bicooperative game.

\medskip

Considering on $\Q(N)$ the product order 
\[
(A,A')\subseteq(B,B') \Leftrightarrow A\subseteq B \text{ and } A'\subseteq B',
\]
the M\"obius transform $b$ of a bicapacity $v$ is the solution of:
\begin{align*}
v(A_1,A_2) &
=\sum_{(B_1,B_2)\subseteq(A_1,A_2)}b(B_1,B_2)\\
& = \sum_{\substack{B_1\subseteq A_1\\ B_2\subseteq A_2}}b(B_1,B_2).
\end{align*}
This gives:
\[
b(A_1,A_2)=\sum_{\substack{B_1\subseteq A_1\\ B_2\subseteq A_2}}
(-1)^{|A_1\setminus B_1|+|A_2\setminus B_2|}v(B_1,B_2)
\]
(see Fujimoto \cite{fuj05}). Unanimity games are then naturally defined by
\[
u_{(A_1,A_2)}(B_1,B_2):=\begin{cases}
                        1, & \text{ if } (B_1,B_2)\supseteq(A_1,A_2)\\
                        0, & \text{ else.}
                        \end{cases}
\]
and form a basis of bicooperative games.

The expression of the Choquet integral in terms of $b$ is given by
\begin{equation}\label{eq:2}
\int f\, dv = \sum_{(A_1,A_2)\in\mathcal{Q}(N)} b(A_1,A_2)\Big[\bigwedge_{i\in
    A_1} f^+(i)\wedge \bigwedge_{j\in A_2}f^-(j)\Big],
\end{equation}
with $f^+:=f\vee 0$ and $f^-:=(-f)^+$. 

\subsection{Lattices, geometric realizations, and triangulation}\label{sec:latt}
A \emph{lattice} is a set $L$ endowed with a partial order $\leq$ such that for
any $x,y\in L$ their least upper bound $x\vee y$ and greatest lower bound
$x\wedge y$ always exist. For finite lattices, the greatest element of $L$
(denoted $\top$) and least element $\bot$ always exist. $x$ \emph{covers} $y$
(denoted $x\succ y$) if $x> y$ and there is no $z$ such that $x>z>y$. A sequence
of elements $x\leq y\leq\cdots\leq z$ of $L$ is called a \emph{chain} from $x$
to $z$, while an \emph{antichain} is a sequence of elements such that it
contains no pair of comparable elements. A chain from $x$ to $z$ is
\emph{maximal} if no element can be added in the chain, i.e., it has the form
$x\prec y\prec\cdots\prec z$.

The lattice is \emph{distributive} if $\vee,\wedge$ obey
distributivity. An element $j\in L$ is \emph{join-irreducible} if it cannot be
expressed as a supremum of other elements. Equivalently, $j$ is join-irreducible
if it covers only one element. Join-irreducible elements covering $\bot$ are
called \emph{atoms}, and the lattice is \emph{atomistic} if all
join-irreducible elements are atoms. The set of all join-irreducible elements
of $L$ is denoted $\mathcal{J}(L)$. 

For any $x\in L$, we say that $x$ \emph{has a complement in $L$} if there exists
$x'\in L$ such that $x\wedge x'=\bot$ and $x\vee x'=\top$. The complement is
unique if the lattice is distributive. 

An important property is that in a distributive lattice, any element $x$ can be
written as an irredundant supremum of join-irreducible elements in a unique
way. We denote by $\eta(x)$ the \emph{(normal)
decomposition} of $x$, defined as the set of join-irreducible elements smaller
or equal to $x$, i.e., $\eta(x):=\{j\in \mathcal{J}(L)\mid j\leq x\}$.  Hence
\[
x= \bigvee_{j\in \eta(x)} j
\]
(throughout the paper, $j,j',\ldots$ will always denote join-irreducible elements).
Note that this decomposition may be redundant. 

We can rephrase differently the above result in several ways, which will be useful
for the sequel.  $Q\subseteq L$ is a
\emph{downset} of $L$ if $x\in Q$, $y\in L$ and $y\leq x$ imply $y\in Q$. For
any subset $P$ of $L$, we denote by $\mathcal{O}(P)$ the set of all downsets of
$P$. Then the mapping $\eta$ is an isomorphism of $L$ onto
$\mathcal{O}(\mathcal{J}(L))$ (Birkhoff's theorem \cite{bir33}). Also,
\begin{equation}\label{eq:dist}
\eta(x\vee y)= \eta(x)\cup\eta(y), \quad \eta(x\wedge y)=\eta(x)\cap \eta(y)
\end{equation}
if $L$ is distributive. Next, downsets of some
partially ordered set $P$ correspond bijectively to nonincreasing mappings from
$P$ to $\{0,1\}$. Let us denote by $\mathcal{D}(P)$ the set of all nonincreasing
mappings from $P$ to $\{0,1\}$. Then Birkhoff's theorem can be rephrased as
follows: \emph{any distributive lattice $L$ is isomorphic to
  $\mathcal{D}(\mathcal{J}(L))$}. Finally, note that a mapping of
$\mathcal{D}(P)$ can be considered as a vertex of $[0,1]^{|P|}$. In summary, we have:
\begin{equation}
x\in L \leftrightarrow \eta(x)\in\mathcal{O}(\mathcal{J}(L)) \leftrightarrow
1_{\eta(x)}\in \mathcal{D}(\mathcal{J}(L)) \leftrightarrow (1_{\eta(x)},0_{\eta(x)^c})\in[0,1]^{|\mathcal{J}(L)|}
\end{equation}
where the notation $(1_A,0_{A^c})$ denotes a vector whose coordinates are 1 if
in $A$, and 0 otherwise.  All arrows represent isomorphisms, the leftmost one
being an isomorphism if $L$ is distributive.

\medskip

We introduce now the notion of \emph{geometric realization} of a lattice, and
its natural triangulation (see Koshevoy \cite{kos98} for more details). 
For any partially ordered set $P$, we define $\mathcal{C}(P)$ as the set of
nonincreasing mappings from $P$ to $[0,1]$. It is a convex polyhedron, whose set of vertices is
$\mathcal{D}(P)$.
\begin{definition}\label{def:geore}
The \emph{geometric realization} of a distributive
lattice $L$ is the set $\mathcal{C}(\mathcal{J}(L))$. 
\end{definition}
The \emph{natural triangulation} of
$\mathcal{C}(\mathcal{J}(L))$ consists in
partitioning $\mathcal{C}(\mathcal{J}(L))$ into simplices whose vertices are in
$\mathcal{D}(\mathcal{J}(L))$. These simplices correspond to maximal chains of
$\mathcal{D}(\mathcal{J}(L))$. The following proposition summarizes all what we
need in the sequel. 
\begin{proposition}\label{prop:rem2}
Suppose that $L$ is distributive, with $n$ join-irreducible elements. Consider
any maximal chain $C:=\{1_{\varnothing}=0\prec 1_{X_1}\prec\cdots\prec
1_{X_{|\mathcal{J}(L)|}}=1\}$. Then
\begin{enumerate}
\item The simplex $\sigma(C)$ is $n$-dimensional, and contains vertices
$(0,\ldots,0)$ and $(1,\ldots,1)$ in $[0,1]^n$.
\item The sequence
$X_1,\ldots,X_n$ induces a permutation $\pi:\{1,\ldots,n\}\rightarrow
  \mathcal{J}(L)$ such that $X_i=\{\pi(1),\ldots,\pi(i)\}$,
  $i=1,\ldots,n$, and
\begin{equation}\label{eq:fj}
f(j) = \sum_{i=1}^n \alpha_i 1_{X_i}(j) = \sum_{X_i\ni j}\alpha_i =
\sum_{i=\pi^{-1}(j)}^n \alpha_i,\quad \forall j\in\mathcal{J}(L).
\end{equation}
Conversely, a permutation $\pi$ induces a maximal chain if and only if it
  fulfills the condition
\[
\forall j,j'\in \mathcal{J}(L), j\leq j'\Rightarrow \pi^{-1}(j)\leq \pi^{-1}(j').
\]
\item The solution of (\ref{eq:fj}) is
\begin{equation}\label{eq:alpha}
\alpha_i = f(\pi(i)) - f(\pi(i+1)),\quad i=1,\ldots,n-1, \text{ and }
\alpha_n=f(\pi(n)),
\end{equation}
and $\alpha_0=1-\sum_{i=1}^n\alpha_i= 1 - f(\pi(1))$. In addition, $f(\pi(1))\geq
f(\pi(2))\geq\cdots\geq f(\pi(n))$.
\end{enumerate}
\end{proposition}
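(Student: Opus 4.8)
My plan is to carry out everything inside the lattice $\mathcal{O}(\mathcal{J}(L))$ of downsets of the poset $\mathcal{J}(L)$, which by Birkhoff's theorem is isomorphic to $L$ and, via $Q\mapsto 1_Q$, to $\mathcal{D}(\mathcal{J}(L))$; under this correspondence a maximal chain of $\mathcal{D}(\mathcal{J}(L))$ is nothing but a maximal chain of downsets $\varnothing=X_0\subsetneq X_1\subsetneq\cdots\subsetneq X_n=\mathcal{J}(L)$, with $n=|\mathcal{J}(L)|$. The first point I would establish is that each cover $X_{i-1}\prec X_i$ adds exactly one element: taking $j$ minimal in $X_i\setminus X_{i-1}$, every $j'<j$ already lies in $X_{i-1}$ (it belongs to the downset $X_i$ and cannot be in $X_i\setminus X_{i-1}$ by minimality of $j$), so $X_{i-1}\cup\{j\}$ is a downset between $X_{i-1}$ and $X_i$ and the cover forces $X_i=X_{i-1}\cup\{j\}$. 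Writing $\pi(i):=j$ then gives $X_i=\{\pi(1),\ldots,\pi(i)\}$ with $\pi$ a bijection onto $\mathcal{J}(L)$ and $|X_i|=i$. Item 1 follows immediately: the consecutive differences $1_{X_i}-1_{X_{i-1}}$ are the $n$ distinct unit vectors $e_{\pi(i)}$, hence linearly independent, so the $n+1$ vertices $1_{X_0},\ldots,1_{X_n}$ are affinely independent and $\sigma(C)$ is an $n$-simplex containing $1_{X_0}=(0,\ldots,0)$ and $1_{X_n}=1_{\mathcal{J}(L)}=(1,\ldots,1)$.

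For item 2, I would use that $f\in\sigma(C)$ is a convex combination $f=\sum_{i=0}^n\alpha_i 1_{X_i}$ of the vertices and simply evaluate it at $j\in\mathcal{J}(L)$: since $1_{X_0}\equiv 0$ and $1_{X_i}(j)=1$ exactly when $i\geq\pi^{-1}(j)$, this collapses to $f(j)=\sum_{i=\pi^{-1}(j)}^n\alpha_i$, which is \eqref{eq:fj}. For the converse, given any permutation $\pi$ and $X_i:=\{\pi(1),\ldots,\pi(i)\}$, the increasing chain $X_0\subsetneq\cdots\subsetneq X_n$ maps into $\mathcal{D}(\mathcal{J}(L))$ precisely when each $X_i$ is a downset, and such a chain is then automatically maximal since it has length $n=\mathrm{height}\,\mathcal{O}(\mathcal{J}(L))$; unravelling ``each $X_i$ is a downset'' gives ``$j'\leq j$ and $\pi^{-1}(j)\leq i$ imply $\pi^{-1}(j')\leq i$ for all $i$'', and taking $i=\pi^{-1}(j)$ shows this is equivalent to $j'\leq j\Rightarrow\pi^{-1}(j')\leq\pi^{-1}(j)$, i.e., the stated order-preservation of $\pi^{-1}$ (up to renaming the two join-irreducibles).

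Item 3 is then linear algebra. Equation \eqref{eq:fj} for the index $k$ reads $f(\pi(k))=\sum_{i=k}^n\alpha_i$, a triangular system with unit diagonal, hence uniquely solvable; subtracting the equations for $k$ and $k+1$ gives $\alpha_k=f(\pi(k))-f(\pi(k+1))$ for $k<n$, the equation for $k=n$ gives $\alpha_n=f(\pi(n))$, and the normalization $\sum_{i=1}^n\alpha_i=f(\pi(1))$ gives $\alpha_0=1-f(\pi(1))$, which is \eqref{eq:alpha}. Finally, $f\in\sigma(C)$ means all barycentric coordinates are nonnegative, and ``$\alpha_i\geq 0$ for $i=1,\ldots,n$'' is literally ``$f(\pi(1))\geq f(\pi(2))\geq\cdots\geq f(\pi(n))\geq 0$''; in fact this chain of inequalities is exactly the condition singling out which simplex of the natural triangulation contains $f$.

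The only part that is not pure bookkeeping is the structural claim opening the argument --- that covers in $\mathcal{O}(\mathcal{J}(L))$ add exactly one join-irreducible and that its maximal chains are graded of length $n$ --- so that is where I would be careful; once the staircase picture $X_i=\{\pi(1),\ldots,\pi(i)\}$ is available, items 2 and 3 amount to evaluating a convex combination coordinatewise and inverting a bidiagonal matrix, which are routine.
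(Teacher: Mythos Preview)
Your argument is correct. The paper itself does not supply a proof of this proposition: it is introduced with the sentence ``The following proposition summarizes all what we need in the sequel'' and is left unproved, being treated as a collection of standard facts about the natural triangulation of $\mathcal{C}(\mathcal{J}(L))$ (in the spirit of Koshevoy's work cited there). So there is nothing to compare against; your write-up simply fills in the details the authors omitted.

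The route you take --- working in $\mathcal{O}(\mathcal{J}(L))$, showing that covers add a single element by a minimality argument, reading off $\pi$ from the staircase $X_i=\{\pi(1),\ldots,\pi(i)\}$, and then inverting the bidiagonal system --- is exactly the standard one and matches the way the paper sets up the surrounding material (the paragraph on $\mathcal{O}(\mathcal{J}(L))$, $\mathcal{D}(\mathcal{J}(L))$ and Birkhoff's theorem just before the proposition). One cosmetic point: your parenthetical ``(up to renaming the two join-irreducibles)'' is unnecessary, since the condition $j'\leq j\Rightarrow\pi^{-1}(j')\leq\pi^{-1}(j')$ you derive is literally the paper's condition with the dummy variables relabelled; you could also make explicit the easy converse (order-preservation of $\pi^{-1}$ implies each $X_i$ is a downset), which you only state.
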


\begin{definition}
For any functional $F:\mathcal{D}(\mathcal{J}(L))\rightarrow\mathbb{R}$ on a
distributive lattice $L$, its \emph{natural extension to the geometric
realization of $L$} is defined by:
\[
\overline{F}(f):=\sum_{i=0}^p \alpha_i F(1_{X_i}) 
\]
for all $f\in \mathrm{int}(\sigma(C))$, with $C$ being a chain
$\{1_{X_0}<1_{X_1}<\cdots< 1_{X_p}\}$ in $\mathcal{D}(\mathcal{J}(L))$, and
$\sigma(C)$ its convex hull in $\mathcal{C}(\mathcal{J}(L))$, with
$f=\sum_{i=0}^p \alpha_i1_{X_i}$. 
\end{definition}
The following proposition readily follows from Proposition \ref{prop:rem2} and
the above definition.
\begin{proposition}\label{prop:rem3}
Let $L$ be a distributive lattice, with $n$ join-irreducible elements, and any
functional $F:\mathcal{D}(\mathcal{J}(L))\rightarrow\mathbb{R}$. Consider
any maximal chain $C:=\{1_{\varnothing}=0\prec 1_{X_1}\prec\cdots\prec
1_{X_{|\mathcal{J}(L)|}}=1\}$. 
\begin{enumerate}
\item For any $f\in\sigma(C)$,
\begin{equation}\label{eq:natin}
\overline{F}(f) =  \sum_{i=1}^n [f(\pi(i)) -
  f(\pi(i+1))]F(1_{\{\pi(1),\ldots,\pi(i)\}})
\end{equation}
with $f(\pi(n+1)):=0$.
\item $\overline{F}$ is linear in each simplex $\sigma(C)$, i.e., $\overline{F}(f+g)
  = \overline{F}(f) + \overline{F}(g)$ provided that $f,g,f+g$ belongs to the
  same $\sigma(C)$. Moreover, $\overline{F}$ is linear in $F$, in the sense that
  $\overline{F+G}(f)=\overline{F}(f) +\overline{G}(f)$ for any $f$. 
\end{enumerate}
\end{proposition}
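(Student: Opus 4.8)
The plan is to unwind the two definitions and let Proposition~\ref{prop:rem2} do the geometric work. Fix a maximal chain $C=\{1_\varnothing=0\prec 1_{X_1}\prec\cdots\prec 1_{X_n}=1\}$ and let $\pi$ be the permutation it induces, so that $X_i=\{\pi(1),\ldots,\pi(i)\}$ by Proposition~\ref{prop:rem2}(2). For $f$ in the interior of $\sigma(C)$, write $f=\sum_{i=0}^n\alpha_i 1_{X_i}$; by Proposition~\ref{prop:rem2}(3) the coefficients are $\alpha_i=f(\pi(i))-f(\pi(i+1))$ for $1\le i\le n$ (with the convention $f(\pi(n+1)):=0$) and $\alpha_0=1-f(\pi(1))$, and moreover $f(\pi(1))\ge\cdots\ge f(\pi(n))$, so that $\pi$ is precisely a permutation of the kind used in Definition~\ref{def:cho}. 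Substituting these $\alpha_i$ into the defining formula $\overline F(f)=\sum_{i=0}^n\alpha_i F(1_{X_i})$ and noting $1_{X_0}=1_\varnothing$ yields the right-hand side of \eqref{eq:natin} (the extra term $\alpha_0 F(1_\varnothing)$ vanishes under the game convention $F(\bot)=0$ that is in force throughout; otherwise it is simply carried along). This proves \eqref{eq:natin} on $\mathrm{int}(\sigma(C))$, which is all the definition directly covers.

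To get \eqref{eq:natin} on the whole closed simplex $\sigma(C)$, I would observe that the right-hand side of \eqref{eq:natin} is a continuous function of $f$ on $\sigma(C)$ and that, on a face shared with an adjacent simplex $\sigma(C')$, it agrees with the analogous expression attached to $C'$: on such a face some of the strict inequalities $f(\pi(i))\ge f(\pi(i+1))$ that distinguish $C$ from $C'$ collapse to equalities, so the differing summands either vanish or coincide. Hence $\overline F$ extends uniquely and continuously to $\mathcal{C}(\mathcal{J}(L))$, and on each $\sigma(C)$ it is given by \eqref{eq:natin}. For Part~2, the first linearity statement is then immediate: with $C$ (hence $\pi$) fixed, the right-hand side of \eqref{eq:natin} is a linear function of the vector $(f(\pi(1)),\ldots,f(\pi(n)))$, so if $f,g,f+g$ all lie in the same $\sigma(C)$ they are all evaluated by this single formula and additivity holds termwise. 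The second statement is immediate from the definition: for fixed $f$, the chain $C$ and the barycentric weights $\alpha_i$ depend on $f$ alone, not on the functional, so $\overline{F+G}(f)=\sum_i\alpha_i(F+G)(1_{X_i})=\overline F(f)+\overline G(f)$.

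I do not expect a genuine obstacle here: all the substantive content is already in Proposition~\ref{prop:rem2}, and the argument is essentially bookkeeping with $\pi$, the $\alpha_i$, and the harmless $\alpha_0 F(1_\varnothing)$ term. The only point deserving a word of care is the passage from simplex interiors to the full geometric realization, i.e.\ that the piecewise-linear pieces glue into a well-defined continuous function across shared faces; this is the standard fact that the natural triangulation is an honest triangulation, and it is exactly what makes the phrase ``$\overline F(f)$ for $f\in\sigma(C)$'' in the statement meaningful.
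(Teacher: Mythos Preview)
Your proposal is correct and matches the paper's approach: the paper does not give a proof at all, stating only that the proposition ``readily follows from Proposition~\ref{prop:rem2} and the above definition,'' and your argument is precisely that unwinding. Your extra care about the $\alpha_0 F(1_\varnothing)$ term and the gluing across faces is more than the paper supplies, but entirely in its spirit.
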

\begin{quote}
\textbf{Example 1:} If $L$ is the Boolean lattice $2^N$, with $N:=\{1,\ldots,n\}$,
then $\mathcal{J}(L)=N$ (atoms). We have
$\mathcal{D}(\mathcal{J}(L))=\{x:N\rightarrow \{0,1\},x\text{
nonincreasing}\}$, but since $N$ is an antichain, there is no restriction on
$x$ and $\mathcal{D}(\mathcal{J}(L))=\{0,1\}^N$, i.e., it is the set of
vertices of $[0,1]^n$. Similarly, $\mathcal{C}(\mathcal{J}(L))= [0,1]^N$, which
is the hypercube itself. 

Consider now a maximal chain in
  $\mathcal{D}(\mathcal{J}(L))$, denoted by $C:=\{1_{A_0}<1_{A_1}<\cdots
  <1_{A_n}\}$, with $\emptyset=:A_0\subset A_1\subset\cdots\subset A_n:=N$.
  It corresponds to a permutation $\pi$ on $N$, with $A_i=\{\pi(1),\ldots,\pi(i)\}$.
Since $\mathcal{J}(L)$ is an antichain, conversely any permutation corresponds
  to a maximal chain. Using (\ref{eq:alpha}), we get
\begin{align*}
  \overline{F}(f) & = \sum_{i=1}^n\alpha_iF(1_{A_i})\\
  & = \sum_{i=1}^n [f(\pi(i)) -
  f(\pi(i+1))]F(1_{\{\pi(1),\ldots,\pi(i)\}}),
\end{align*}
with the convention $f(\pi(n+1)):=0$. Putting $\mu(A):=F(1_A)$, we recognize
the Choquet integral $\int f \,d\nu$ (see Definition \ref{def:cho}). $\Box$
\end{quote}
This example shows that the Choquet integral is the natural extension of
capacities. Hence, by analogy, $\overline{F}(f)$ could be
  called the \emph{Choquet integral of $f$ w.r.t. $F$}. Moreover, using Remark 1, we
  could consider $F$ as a game or capacity defined over a sublattice of the
  Boolean lattice $2^n$.

\section{Bipolar structures}\label{sec:bipo}
\subsection{Bipolar extension of $L$}\label{sec:biex}
\begin{definition}
Let us consider $(L,\leq)$ an inf-semilattice with bottom element $\bot$. The
\emph{bipolar extension} $\widetilde{L}$ of $L$ is defined as follows:
\[
\widetilde{L}:=\{(x,y)\mid x,y\in L, x\wedge y=\bot\},
\]
which we endow with the product order $\leq$ on $L^2$.   
\end{definition}
Remark that $\widetilde{L}$ is a downset of $L^2$. The following holds.
\begin{proposition}
Let $(L,\leq)$ be an inf-semilattice.
\begin{itemize}
\item [(i)] $(\widetilde{L},\leq)$ is an inf-semilattice whose bottom element
is $(\bot,\bot)$, where $\leq$ is the product order on $L^2$.
\item [(ii)] The set of join-irreducible elements of $\widetilde{L}$ is
\[
\!\!\!\!\mathcal{J}(\widetilde{L})=\{(j,\bot)\mid
j\in\mathcal{J}(L)\}\cup\{(\bot,j)\mid j\in\mathcal{J}(L)\}.
\]
\item [(iii)] The normal decomposition writes
\[
(x,y) = \bigvee_{j\leq x,j\in \mathcal{J}(L)}(j,\bot)\vee\bigvee_{j\leq y,j\in
\mathcal{J}(L)}(\bot,j). 
\]
\end{itemize}
\end{proposition}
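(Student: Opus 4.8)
The plan is to exploit the fact, already noted, that $\widetilde{L}$ is a downset of $L^{2}$, so that every assertion reduces to elementary facts about product orders.

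For (i), I would first record that $L^{2}$, with the coordinatewise meet, is an inf-semilattice with bottom $(\bot,\bot)$, and that $\widetilde{L}$ is a downset of it: if $x\wedge y=\bot$ and $(x',y')\leq(x,y)$ then $x'\wedge y'\leq x\wedge y=\bot$. A nonempty downset $D$ of an inf-semilattice is again an inf-semilattice under the same meet, because $a,b\in D$ forces $a\wedge b\leq a\in D$. Hence the meet in $\widetilde{L}$ of $(x_{1},y_{1})$ and $(x_{2},y_{2})$ is $(x_{1}\wedge x_{2},\,y_{1}\wedge y_{2})$, which lies in $\widetilde{L}$ since $(x_{1}\wedge x_{2})\wedge(y_{1}\wedge y_{2})\leq x_{1}\wedge y_{1}=\bot$; and $(\bot,\bot)\in\widetilde{L}$ is below every element, so it is the bottom.

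For (ii), I would use the characterisation (valid in finite posets) that $j$ is join-irreducible iff it covers exactly one element, together with the standard description of the covering relation of a product order: $(a,b)$ covers $(a',b')$ in $L^{2}$ iff either $a'\prec a$ and $b'=b$, or $a'=a$ and $b'\prec b$. Since $\widetilde{L}$ is a downset of $L^{2}$ containing $(a,b)$, everything strictly below $(a,b)$ already lies in $\widetilde{L}$, so the elements covered by $(a,b)$ inside $\widetilde{L}$ are exactly those covered in $L^{2}$, namely the $(a',b)$ with $a'\prec a$ and the $(a,b')$ with $b'\prec b$ (each of which is in $\widetilde{L}$, e.g.\ $a'\wedge b\leq a\wedge b=\bot$). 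Thus $(a,b)$ covers precisely $c(a)+c(b)$ elements, where $c(z)$ denotes the number of elements $z$ covers in $L$, with $c(\bot)=0$. In a finite poset with bottom, $c(z)=0$ iff $z=\bot$ and $c(z)=1$ iff $z\in\mathcal{J}(L)$; so $(a,b)\in\mathcal{J}(\widetilde{L})$ iff $c(a)+c(b)=1$, i.e.\ iff $(a,b)$ equals $(j,\bot)$ or $(\bot,j)$ for some $j\in\mathcal{J}(L)$. This yields both inclusions of (ii) at once.

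For (iii), with $S:=\{(j,\bot)\mid j\in\mathcal{J}(L),\ j\leq x\}\cup\{(\bot,j)\mid j\in\mathcal{J}(L),\ j\leq y\}$, I would show $(x,y)=\bigvee_{\widetilde{L}}S$. That $(x,y)$ is an upper bound of $S$ is immediate; for minimality, let $(u,v)\in\widetilde{L}$ be any upper bound of $S$, so $u$ majorises every join-irreducible of $L$ lying below $x$, and it remains to deduce $u\geq x$ (and symmetrically $v\geq y$). The delicate point — and the main obstacle — is that $L$ is only assumed to be an inf-semilattice, so the decomposition $x=\bigvee\{j\in\mathcal{J}(L)\mid j\leq x\}$ is not available a priori. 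I would recover it by passing to the downset $\{z\in L\mid z\leq x\}$: it is a finite inf-semilattice with top element $x$, hence a lattice, its join-irreducibles are exactly $\{j\in\mathcal{J}(L)\mid j\leq x\}$ (covering relations inside a downset agree with those in $L$), and in a finite lattice every element is the join of the join-irreducibles below it, so $x$ is the supremum of that set. Since $u\wedge x$ lies below $x$ and still majorises the same set, we get $u\wedge x=x$, i.e.\ $x\leq u$, as wanted. Everything else is routine bookkeeping with downsets and product orders.
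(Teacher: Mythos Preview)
Your argument is correct and follows the same underlying idea as the paper --- namely, that $\widetilde{L}$ is a downset of $L^{2}$, so meets, covering relations, and join-irreducibles are inherited from the product. The paper's own proof is extremely terse (for (ii) it just asserts that these are the join-irreducibles of $L^{2}$ and they all lie in $\widetilde{L}$; for (iii) it says ``clear from (ii)''), whereas you supply the missing justifications: the counting argument $c(a)+c(b)=1$ for (ii), and the passage to the finite lattice $[\bot,x]$ to recover $x=\bigvee\{j\in\mathcal{J}(L)\mid j\leq x\}$ in (iii), which is exactly the point the paper glosses over when $L$ is merely an inf-semilattice.
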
 
\begin{proof}
(i) Let us consider $(x,y),(z,t)\in L^2$. Then $(x,y)\wedge(z,t)=(x\wedge
z,y\wedge t)$ is the greatest lower bound of $(x,y)$ and $(z,t)$ for the
product order. Suppose $x\wedge y=\bot$ and $z\wedge t=\bot$. Then $(x\wedge
z)\wedge(y\wedge t)=\bot$ too, which proves that the greatest lower bound
always exists in $\widetilde{L}$. 

(ii) clear since these are the join-irreducible element of $L^2$, and they all
belong to $\widetilde{L}$. 

(iii) clear from (ii).
\end{proof}

We consider now the M\"obius function over $\widetilde{L}$. The aim is to solve
\begin{equation}
\label{eq:bm}
f(x,y) =\!\!\! \sum_{(x',y')\leq(x,y), (x',y')\in\widetilde{L}}g(x',y'),\quad\forall (x,y)\in\widetilde{L},
\end{equation}
where $f,g$ are real-valued functions on $\widetilde{L}$. The solution is given
through the M\"obius function on $\widetilde{L}$:
\begin{equation}
\label{eq:bms}
g(x,y) = \sum_{\substack{(z,t)\leq(x,y)\\(z,t)\in \widetilde{L}}}f(z,t)\mu_{\widetilde{L}}((z,t),(x,y)).
\end{equation}
The following holds.
\begin{proposition}\label{prop:mob}
The M\"obius function on $\widetilde{L}$ is given by:
\[
\mu_{\widetilde{L}}((z,t),(x,y))=\mu_L(z,x)\mu_L(t,y).
\]
\end{proposition}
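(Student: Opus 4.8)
The plan is to reduce the statement to the classical fact that the Möbius function of a product of posets is the product of the Möbius functions, after checking that the interval structure in $\widetilde{L}$ matches that of $L^2$. Concretely, I would first observe that for any $(z,t),(x,y)\in\widetilde{L}$ with $(z,t)\leq(x,y)$, the closed interval $[(z,t),(x,y)]$ computed inside the poset $\widetilde{L}$ coincides with the interval $[(z,t),(x,y)]$ computed inside $L^2$. Indeed, if $(z,t)\leq(u,v)\leq(x,y)$ in $L^2$ then $u\wedge v\leq x\wedge y=\bot$, so $u\wedge v=\bot$ and hence $(u,v)\in\widetilde{L}$. This is the key point that makes the reduction legitimate: $\widetilde{L}$, being a downset of $L^2$, is "interval-closed from below", so Möbius functions of corresponding intervals agree. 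I expect this to be the only place where any real argument is needed; everything else is bookkeeping.

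Next I would invoke the product formula for Möbius functions: if $(P,\leq)$ and $(P',\leq)$ are locally finite posets, then $\mu_{P\times P'}((a,a'),(b,b'))=\mu_P(a,b)\,\mu_{P'}(a',b')$ (Rota \cite{rot64}). Applying this with $P=P'=L$ gives $\mu_{L^2}((z,t),(x,y))=\mu_L(z,x)\,\mu_L(t,y)$. Combining with the interval-coincidence observation of the previous step yields $\mu_{\widetilde{L}}((z,t),(x,y))=\mu_{L^2}((z,t),(x,y))=\mu_L(z,x)\,\mu_L(t,y)$, which is exactly the claim.

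Alternatively, if one prefers a self-contained argument not citing the product formula, I would verify the defining recurrence of the Möbius function directly: set $\widehat{\mu}((z,t),(x,y)):=\mu_L(z,x)\mu_L(t,y)$ and check that $\sum_{(z,t)\leq(u,v)\leq(x,y)}\widehat{\mu}((z,t),(u,v))=\delta_{(z,t),(x,y)}$. Using the interval coincidence, the sum factorizes as $\bigl(\sum_{z\leq u\leq x}\mu_L(z,u)\bigr)\bigl(\sum_{t\leq v\leq y}\mu_L(t,v)\bigr)=\delta_{z,x}\,\delta_{t,y}=\delta_{(z,t),(x,y)}$, and by uniqueness of the Möbius function this forces $\widehat{\mu}=\mu_{\widetilde{L}}$. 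The main obstacle, such as it is, is purely conceptual: one must not blindly apply the product formula to $\widetilde{L}$ as if it were $L^2$, since $\widetilde{L}$ is a proper subposet; the legitimacy rests entirely on the fact that the relevant intervals are unaffected by passing to the subposet, which is why I would state and prove that coincidence first.
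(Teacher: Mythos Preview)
Your proof is correct and rests on the same key observation as the paper's: for $(x,y)\in\widetilde{L}$ and $(u,v)\leq(x,y)$ in $L^2$, the constraint $u\wedge v=\bot$ is automatic, so intervals below $(x,y)$ in $\widetilde{L}$ and in $L^2$ coincide. The route differs in presentation, however. The paper does not invoke the product formula for M\"obius functions as a black box; instead it carries out the inversion explicitly, one coordinate at a time: it introduces an auxiliary function $h(x',y)=\sum_{y'\leq y}g(x',y')$ (having first used the key observation to drop the constraint $x'\wedge y'=\bot$), rewrites $f(x,y)=\sum_{x'\leq x}h(x',y)$, and then applies the one-variable M\"obius inversion on $L$ twice in succession to reach $g(x,y)=\sum_{(z,t)\leq(x,y)}\mu_L(z,x)\mu_L(t,y)f(z,t)$. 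In effect the paper re-derives the product formula in this particular setting, whereas you identify the abstract reason (interval coincidence plus Rota's product formula) and apply it directly. Your approach is shorter and more conceptual; the paper's is more self-contained for a reader who may not have the product formula at hand. Your ``alternative'' argument, verifying the defining recurrence by factorizing the sum, is essentially a compressed version of what the paper does.
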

\begin{proof}
Let us define $h(x',y):=\sum_{\substack{y'\leq y\\x'\wedge y'=\bot}}g(x',y')$
for a given $x'\in L$ such that $x'\wedge y=\bot$. Since $y'\leq y$, $x'\wedge
y=\bot$ implies $x'\wedge y'=\bot$ too. Hence:
\begin{equation}
\label{eq:e1}
h(x',y) = \sum_{y'\leq y}g(x',y').
\end{equation}
By a similar argument, note that (\ref{eq:bm}) can be rewritten as
\begin{equation}
\label{eq:bm'}
f(x,y) = \sum_{x'\leq x}\sum_{y'\leq y}g(x',y').
\end{equation}
Putting (\ref{eq:e1}) in (\ref{eq:bm'}) gives
\begin{equation}
\label{eq:e2}
f(x,y)= \sum_{x'\leq x}h(x',y). 
\end{equation}
Applying M\"obius inversion to (\ref{eq:e1}) and (\ref{eq:e2}) gives
\begin{equation}
\label{eq:e1'}
g(x,y) = \sum_{t\leq y}\mu_L(t,y)h(x,t), 
\end{equation}
for some fixed $x$, $x\wedge y=\bot$, and 
\begin{equation}
\label{eq:e2'}
h(x,y) = \sum_{z\leq x}\mu_L(z,x)f(z,y)
\end{equation}
for some fixed $y$, $x\wedge y=\bot$. Using (\ref{eq:e2'}) into (\ref{eq:e1'})
leads to, for $(x,y)\in\widetilde{L}$: 
\begin{align*}
g(x,y) &= \sum_{t\leq y}\mu_L(t,y)\sum_{z\leq x}\mu_L(z,x)f(z,y)\\
        &= \sum_{(z,t)\leq (x,y)}\mu_L(z,x)\mu_L(t,y)f(z,y).
\end{align*}
Note that in the last equation $(z,t)\in\widetilde{L}$ since $z\leq x$, $t\leq
y$ and $x\wedge y=\bot$ imply $z\wedge t=\bot$. Comparing the above last
equation with (\ref{eq:bms}) gives the desired result.
\end{proof}

Note that as usual, the set of functions $u_{(x,y)}$ defined
by
\begin{equation}\label{eq:ug}
u_{(x,y)}(z,t)=\begin{cases}
        1,\text{ if } (z,t)\geq(x,y)\\
        0,\text{ otherwise}     
        \end{cases}
\end{equation}
forms a basis of the functions on $\widetilde{L}$.

\begin{theorem}
Let $L$ be a finite distributive lattice, and $c(L)$ be the set of its
complemented elements. Then, for any $x\in c(L)$, its complement being denoted
by $x'$, the interval $L(x)$ of $\widetilde{L}$ defined by
\[
L(x):=[(\bot,\bot),(x,x')]
\] 
and endowed with the product order of $L^2$ is isomorphic to $L$, by the order
isomorphism $\phi_x:L(x)\rightarrow L$, $(y,z)\mapsto y\vee z$. The inverse
function $\phi_x^{-1}$ is given by $\phi_x^{-1}(w)=(w\wedge x,w\wedge x')$. 

Moreover, the join-irreducible elements of $L(x)$ are the image of those of $L$
by $\phi_x^{-1}$, i.e.:
\[
\mathcal{J}(L(x))= \{(j\wedge x,j\wedge x')\mid j\in \mathcal{J}(L)\}.
\]
\end{theorem}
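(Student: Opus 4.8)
The plan is to verify directly that $\phi_x$ and the proposed $\phi_x^{-1}$ are mutually inverse order-preserving bijections, and then to transport join-irreducibles along the isomorphism. First I would check that $\phi_x$ maps $L(x)$ into $L$ trivially (it lands in $L$ by definition of $\vee$), and that $\phi_x^{-1}$ maps $L$ into $L(x)$: given $w\in L$, the pair $(w\wedge x, w\wedge x')$ satisfies $(w\wedge x)\wedge(w\wedge x')\leq x\wedge x'=\bot$, so it lies in $\widetilde{L}$, and clearly $(w\wedge x,w\wedge x')\leq(x,x')$, so it lies in the interval $L(x)$. Both maps are obviously order-preserving for the product order (meet and join are monotone).

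Next I would check the two composition identities. For $w\in L$: $\phi_x(\phi_x^{-1}(w)) = (w\wedge x)\vee(w\wedge x') = w\wedge(x\vee x') = w\wedge\top = w$, where the key step uses distributivity to pull $w$ out of the join. For $(y,z)\in L(x)$: we have $y\leq x$ and $z\leq x'$, so $\phi_x^{-1}(\phi_x(y,z)) = ((y\vee z)\wedge x, (y\vee z)\wedge x')$. Now $(y\vee z)\wedge x = (y\wedge x)\vee(z\wedge x) = y\vee(z\wedge x)$ since $y\leq x$; and $z\wedge x\leq x'\wedge x=\bot$, so this is just $y$. Symmetrically the second component is $z$. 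Hence $\phi_x^{-1}\circ\phi_x=\mathrm{id}_{L(x)}$. This establishes the order isomorphism; I would note in passing that $L(x)$ is itself a distributive lattice, being an interval of the (distributive) lattice $L^2$ restricted appropriately, or simply because it is isomorphic to the distributive lattice $L$.

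Finally, for the statement about join-irreducibles: since $\phi_x$ is a lattice isomorphism, it carries $\mathcal{J}(L(x))$ bijectively onto $\mathcal{J}(L)$, and conversely $\phi_x^{-1}$ carries $\mathcal{J}(L)$ onto $\mathcal{J}(L(x))$. Applying the explicit formula for $\phi_x^{-1}$ to each $j\in\mathcal{J}(L)$ gives $\mathcal{J}(L(x)) = \{\phi_x^{-1}(j)\mid j\in\mathcal{J}(L)\} = \{(j\wedge x, j\wedge x')\mid j\in\mathcal{J}(L)\}$, which is exactly the claimed description. The one point worth spelling out is why an order isomorphism between finite lattices automatically preserves join-irreducibility: join-irreducibility of $j$ is equivalent to $j$ covering exactly one element (as recalled in the preliminaries), and covering relations, being defined purely in terms of the order, are preserved by any order isomorphism; equivalently, $\phi_x$ preserves arbitrary joins, so it preserves the property of not being expressible as a join of strictly smaller elements.

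The main obstacle, such as it is, is the use of distributivity in the composition checks — specifically the absorption-type manipulations $(y\vee z)\wedge x = y\vee(z\wedge x)$ and $w\wedge(x\vee x')=w$ — and making sure that the hypothesis $(y,z)\in L(x)$ (i.e. $y\leq x$, $z\leq x'$, hence automatically $y\wedge z=\bot$) is what legitimizes dropping the $z\wedge x=\bot$ term. Everything else is routine monotonicity. I would present the verification of the two round-trip identities as the heart of the argument and treat the join-irreducible claim as an immediate corollary of the isomorphism together with the explicit form of $\phi_x^{-1}$.
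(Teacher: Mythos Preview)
Your proof is correct and complete. The route you take differs from the paper's in one respect worth noting: you verify the two round-trip identities $\phi_x\circ\phi_x^{-1}=\mathrm{id}_L$ and $\phi_x^{-1}\circ\phi_x=\mathrm{id}_{L(x)}$ by direct manipulation of the distributive laws (pulling $w$ out of $(w\wedge x)\vee(w\wedge x')$, and collapsing $z\wedge x\leq x'\wedge x=\bot$), whereas the paper establishes bijectivity through the Birkhoff representation $\eta$, arguing that $\eta(x)$ and $\eta(x')$ partition $\mathcal{J}(L)$ and that every $w$ therefore decomposes uniquely as $y\vee z$ with $\eta(y)=\eta(w)\cap\eta(x)$, $\eta(z)=\eta(w)\cap\eta(x')$. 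Your approach is more self-contained and slightly more elementary, since it does not invoke Birkhoff's theorem; the paper's approach ties the result more visibly to the join-irreducible machinery that is used throughout the rest of the article. Both arrive at the same inverse formula and handle the join-irreducible claim identically, as a corollary of the isomorphism.
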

\begin{proof}
Take $x\in c(L)$ and show that $\phi_x$ is an order isomorphism between $L(x)$
and $L$. First remark that if $y,z\in L$, then $y\vee z\in L$ since $L$ is a
lattice. Also for any $(y,z)\in L(x)$, since $y\leq x$, we have
$\eta(y)\subseteq \eta(x)$, and similarly $\eta(z)\subseteq \eta(x')$.

Let us show that $\phi_x$ is a bijection. Observe that since $x\wedge x'=\bot$
and $x\vee x'=\top$, we have $\eta(x)\cap \eta(x')=\varnothing$ and
$\eta(x)\cup\eta(x')=\mathcal{J}(L)$ by (\ref{eq:dist}), i.e., $x$ and $x'$
partition the join-irreducible elements of $L$. It follows that any $w\in L$ can
be written uniquely as $w=y\vee z$, with $y,z\in L$ defined by
\begin{equation}\label{eq:eta}
\eta(y)=\eta(w)\cap \eta(x),\quad \eta(z)=\eta(w)\cap \eta(x').
\end{equation}
Then $(y,z)\in L(x)$ since $\eta(y)\subseteq \eta(x)$ and $\eta(z)\subseteq
\eta(x')$. The expression of the inverse isomorphism
$\phi^{-1}_x(w)=(w\wedge x,w\wedge x')$ is clear from (\ref{eq:eta}) and
(\ref{eq:dist}).

Take $(y,z)\leq(y',z')$. This means $y\leq y'$ and $z\leq z'$, hence $y\vee
z\leq y'\vee z'$. Conversely, take $w\leq w'$. We have $y=w\wedge x\leq w'\wedge
x = y'$ and similarly for $z=w\wedge x'$. Hence $\phi_x$ is an order
isomorphism. 

Finally, since $\phi_x$ is an order isomorphism, the two
lattices $L$ and $L(x)$ have the same structure, and hence the same
join-irreducible elements. 
\end{proof}

Remark that in any finite lattice, $\bot$ and $\top$ are complemented elements,
and $L(\top) = L$, $L(\bot)=L^*$, where $L^*$ is the dual of $L$ (i.e., $L$ with
the reverse order). An
interesting question is whether the union of all $L(x)$, $x\in c(L)$, is equal
to $\widetilde{L}$.
\begin{theorem}\label{th:biex}
Let $L$ be a finite distributive lattice. Then the bipolar extension
$\widetilde{L}$ can be written as:
\[
\widetilde{L} = \bigcup_{x\in c(L)}L(x)
\]
if and only if $\mathcal{J}(L)$ has all its connected components with a single
bottom element.
\end{theorem}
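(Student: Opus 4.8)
The plan is to analyze when a pair $(x,y)\in\widetilde L$ can be "completed" to a pair of complements inside $L$. Concretely, $(x,y)\in L(z)$ for some $z\in c(L)$ means exactly that there is a complemented $z$ with $x\le z$ and $y\le z'$; since $x\wedge y=\bot$, the join-irreducible sets $\eta(x),\eta(y)$ are disjoint, so via Birkhoff's correspondence the question becomes: given two disjoint downsets $\eta(x),\eta(y)$ of $\mathcal J(L)$, does there exist a partition $\mathcal J(L)=\eta(z)\sqcup\eta(z')$ into two \emph{downsets} with $\eta(x)\subseteq\eta(z)$ and $\eta(y)\subseteq\eta(z')$? (Recall from the proof of the first theorem that complemented elements correspond exactly to such partitions into two downsets, equivalently downsets that are also upsets, i.e. unions of connected components of $\mathcal J(L)$.) So the reformulation I would record first is:
\[
\widetilde L=\bigcup_{z\in c(L)}L(z)\iff\text{every pair of disjoint downsets of }\mathcal J(L)\text{ extends to a 2-block partition of }\mathcal J(L)\text{ into downsets.}
\]

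Next I would characterize the right-hand side in terms of the structure of $\mathcal J(L)$. A 2-block partition of $\mathcal J(L)$ into downsets is the same as a partition where each block is a downset \emph{and} an upset, hence each block is a union of connected components of the comparability graph of $\mathcal J(L)$. So the completion is possible for all disjoint pairs $(\eta(x),\eta(y))$ iff for every connected component $K$ of $\mathcal J(L)$ and every way a downset meets $K$, one can still separate: the obstruction arises precisely inside a single component. Fix a connected component $K$. If $K$ has a unique minimal element $m$, then any downset $D$ with $D\cap K\ne\varnothing$ contains $m$; so two disjoint downsets cannot both meet $K$, and we may freely assign all of $K$ to whichever block is needed — no conflict. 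Conversely, if $K$ has two distinct minimal elements $m_1,m_2$, take $x$ with $\eta(x)=\{j\in\mathcal J(L):j\le m_1\}=\{m_1\}$ (a downset, since $m_1$ is minimal) and $y$ with $\eta(y)=\{m_2\}$; these are disjoint downsets, but any downset containing $m_1$ and any downset containing $m_2$ both meet $K$, and since $K$ is connected a downset-and-upset block containing $m_1$ must contain all of $K$, hence $m_2$ — contradiction. This produces an explicit $(x,y)\in\widetilde L$ lying in no $L(z)$.

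Therefore I would prove both directions. ($\Leftarrow$) Assume every connected component of $\mathcal J(L)$ has a single bottom element. Given $(x,y)\in\widetilde L$, the components met by $\eta(x)$ are disjoint from those met by $\eta(y)$ (a component met by both would contain its unique minimum in both $\eta(x)$ and $\eta(y)$, contradicting disjointness). Let $z$ be the join of all components meeting $\eta(x)$ together with an arbitrary assignment of the remaining components not meeting $\eta(y)$; then $\eta(z)$ is a union of components, hence both a downset and an upset, so $z\in c(L)$ with $z'$ the join of the complementary components, and $x\le z$, $y\le z'$, i.e. $(x,y)\in L(z)$. ($\Rightarrow$) Contrapositive: if some component $K$ has two minimal elements, the pair $(x,y)$ built above lies in $\widetilde L$ but in no $L(z)$, as shown.

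The main obstacle is getting the component-separation bookkeeping exactly right: one must be careful that "single bottom element of a connected component of $\mathcal J(L)$" is used correctly (it is a condition on the induced order on $\mathcal J(L)$, not on $L$), and that the freedom in assigning components untouched by either $\eta(x)$ or $\eta(y)$ is legitimately available — this is where distributivity and Birkhoff's theorem are doing the real work, guaranteeing that unions of components correspond to genuine complemented elements of $L$. I expect the forward direction (constructing the counterexample pair and verifying it lies in no $L(z)$) to require the most care, since one must verify that no complemented $z$ whatsoever works, using connectedness of $K$ to force any downset-upset block meeting $K$ to swallow all of $K$.
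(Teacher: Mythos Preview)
Your proposal is correct and follows essentially the same route as the paper: both arguments reduce the question, via Birkhoff, to whether two disjoint downsets of $\mathcal{J}(L)$ can always be separated by a partition of $\mathcal{J}(L)$ into two downsets (equivalently, into unions of connected components), and both use the unique-minimum hypothesis to show that disjoint downsets cannot meet the same component, with the counterexample in the converse direction built from two distinct minimal elements of a single component. Your write-up is in fact a bit cleaner in one spot: you justify explicitly why a downset meeting a component with unique minimum must contain that minimum (since the unique minimal element lies below every element of the component), whereas the paper leaves this implicit.
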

\begin{proof}
  Take $(y,z)\in \widetilde{L}$, i.e., $y,z\in L$ and
  $\eta(y)\cap\eta(z)=\varnothing$. To find $x\in c(L)$  such that
  $(y,z)\in L(x)$ is equivalent to satisfy the conditions
\begin{enumerate}
\item $\mathcal{J}(L)\setminus\eta(x)$ is a downset   ($x$ is complemented)
\item $\eta(x)\supseteq \eta(y)$, and $\eta(x)\cap\eta(z)=\varnothing$ ($(y,z)$
  belongs to $L(x)$).
\end{enumerate}
Consider $\mathcal{J}(L)$. Its Hasse diagram is formed of connected components,
say $J_1,\ldots,J_l$. Remark that in a given connected component $J_k$, it is
not possible to partition it into downsets. Indeed, suppose that $J_k=D_1\cup
D_2$, with $D_1,D_2$ two disjoint nonempty downsets. Since $J_k$ is connected,
each $x\in J_k$ is comparable with another $y\in J_k$. Hence, by nonemptiness
assumption, there exists $x_1\in D_1$ which is comparable with some $x_2\in
D_2$, i.e., either $x_1\leq x_2$ or the converse. But then $x_1\in D_2$ (or
$x_2\in D_1$), which contradicts the fact they are disjoint. This proves that
complemented elements $x\in L$ are such that
\begin{equation}\label{eq:tile}
\eta(x)=\cup_{k\in K(x)}J_k
\end{equation}
for some index set $K(x)\subseteq \{1,\ldots,l\}$. 

Take some $(y,z)\in \widetilde{L}$ and suppose that $\eta(y)\subseteq \cup_{k\in
  K(y)}J_k$ and $\eta(z)\subseteq \cup_{k\in K(z)}J_k$. Suppose that all $J_k$'s
  have a single bottom element $\bot_k$. Then necessarily, $K(y)\cap
  K(z)=\varnothing$, otherwise $\eta(y)\cap\eta(z)=\varnothing$ would not be
  true.  Then it suffices to take $K(x):=K(y)$, $K(x')=\{1,\ldots,l\}\setminus
  K(x)$ and the conditions (i) and (ii) above are satisfied. Conversely, assume
  that there exist some connected component $J_k$ with two bottom elements, say
  $\bot_k$ and $\bot'_k$. Consider $y,z$ such that $\eta(y)=\bot_k$ and
  $\eta(z)=\bot'_k$. Then $(y,z)\in\widetilde{L}$, but due to (\ref{eq:tile}),
  no $x$ can satisfy condition (ii) above.
\end{proof}
\begin{quote}
\textbf{Example 1 (ctd):} Consider $L=2^N$. Then $\widetilde{L}=\Q(N)$. Since
$2^N$ is Boolean, any element $A\subseteq N$ is complemented ($A'=A^c$), and
$2^N(A) = [(\varnothing,\varnothing), (A, A^c)]$. Obviously the conditions of
Theorem \ref{th:biex} are satisfied, thus
\[
\Q(N) = \bigcup_{A\subseteq N}[(\varnothing,\varnothing), (A, A^c)].
\] 
\end{quote}
This important result shows that $\widetilde{L}$ is composed by ``tiles'', all
identical to $L$ (note however that the union is not disjoint). This suggests the
following definition.
\begin{definition}
Let $L$ be a finite distributive lattice, and $\widetilde{L}$ its bipolar
extension. $\widetilde{L}$ is said to be a \emph{regular mosaic} if
$\mathcal{J}(L)$ has all its connected components with a single bottom element.
\end{definition}
There are two important particular cases of regular mosaics:
\begin{enumerate}
\item $L$ is a product of $m$ linear lattices (totally ordered). Then
\[
c(L) = \{(\top_A,\bot_{A^c})\mid A\subseteq \{1,\ldots,m\}\}
\]
where $(\top_A,\bot_{A^c})$ has coordinate number $i$ equal to $\top_i$ if $i\in
A$, and $\bot_i$ otherwise. Also,
$(\top_A,\bot_{A^c})'=(\bot_A,\top_{A^c})$. This case covers Boolean
lattices (case of capacities), and lattices of the form $k^m$, which we will
address in Section \ref{sec:kary}.
\item $\mathcal{J}(L)$ has a single connected component with one bottom element.
  Then $\widetilde{L}$ contains only elements of the form $(y,\bot)$ or
  $(\bot,z)$, i.e., $\widetilde{L}=L(\bot)\cup L(\top)$. 
\end{enumerate}
The following example shows a case where $\tilde{L}$ is not a regular mosaic.
\begin{quote}
\textbf{Example 2:} we consider $L$ and $\mathcal{J}(L)$ given on Figure
\ref{fig:ex1}. Obviously,  $\mathcal{J}(L)$ does not satisfy the condition for
producing a regular mosaic, and as it can be seen on Fig. \ref{fig:ex2}, the
bipolar structure cannot be obtained as a replication of $L$.
\begin{figure}[htb]
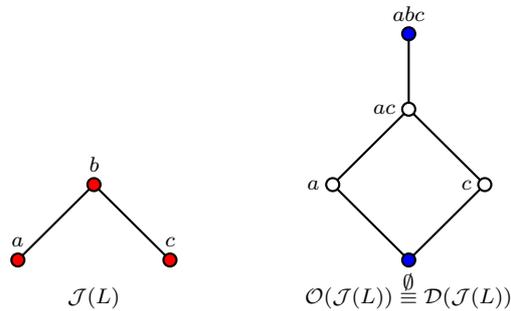

\begin{center}
\psset{unit=0.5cm}
\pspicture(0,0)(4,8)
\psline(2,2)(0,0)
\psline(2,2)(4,0)
\pscircle[fillstyle=solid,fillcolor=red](0,0){0.2}
\pscircle[fillstyle=solid,fillcolor=red](2,2){0.2}
\pscircle[fillstyle=solid,fillcolor=red](4,0){0.2}
\uput[90](0,0){\scriptsize $a$}
\uput[90](2,2){\scriptsize $b$}
\uput[90](4,0){\scriptsize $c$}
\rput(2,-1){\scriptsize $\mathcal{J}(L)$}
\endpspicture
\hspace*{2cm}
\pspicture(0,0)(4,8)
\pspolygon(2,0)(4,2)(2,4)(0,2)
\psline(2,4)(2,6)
\pscircle[fillstyle=solid,fillcolor=blue](2,0){0.2}
\pscircle[fillstyle=solid](0,2){0.2}
\pscircle[fillstyle=solid](4,2){0.2}
\pscircle[fillstyle=solid](2,4){0.2}
\pscircle[fillstyle=solid,fillcolor=blue](2,6){0.2}
\uput[-90](2,0){\scriptsize $\emptyset$}
\uput[90](2,6){\scriptsize $abc$}
\uput[180](0,2){\scriptsize $a$}
\uput[180](4,2){\scriptsize $c$}
\uput[180](2,4){\scriptsize $ac$}
\rput(2,-1){\scriptsize $\mathcal{O}(\mathcal{J}(L))\equiv\mathcal{D}(\mathcal{J}(L))$} 
\endpspicture
\end{center}
\caption{A lattice $L$ and the associated $\mathcal{J}(L)$. In grey, the
  complemented elements}
\label{fig:ex1}
\end{figure}

\begin{figure}[htb]
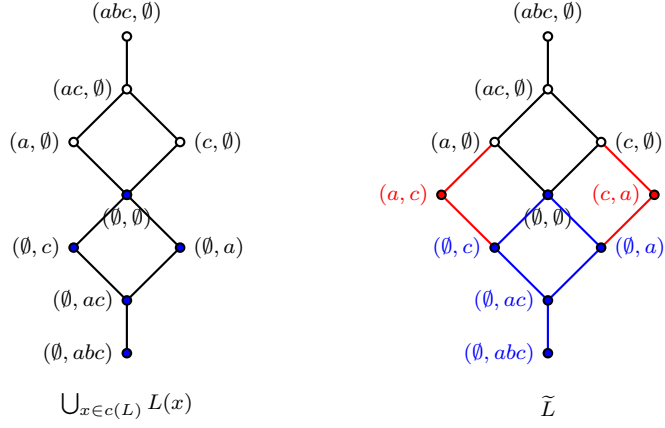

\begin{center}
\psset{unit=0.35cm}
\pspicture(0,-6)(4,8)
\pspolygon(2,0)(4,2)(2,4)(0,2)
\psline(2,4)(2,6)
\pscircle[fillstyle=solid](2,0){0.2}
\pscircle[fillstyle=solid](0,2){0.2}
\pscircle[fillstyle=solid](4,2){0.2}
\pscircle[fillstyle=solid](2,4){0.2}
\pscircle[fillstyle=solid](2,6){0.2}
\uput[-90](2,0){\scriptsize $(\emptyset,\emptyset)$}
\uput[90](2,6){\scriptsize $(abc,\emptyset)$}
\uput[180](0,2){\scriptsize $(a,\emptyset)$}
\uput[0](4,2){\scriptsize $(c,\emptyset)$}
\uput[180](2,4){\scriptsize $(ac,\emptyset)$}
\pspolygon(2,0)(4,-2)(2,-4)(0,-2)
\psline(2,-4)(2,-6)
\pscircle[fillstyle=solid,fillcolor=blue](2,0){0.2}
\pscircle[fillstyle=solid,fillcolor=blue](0,-2){0.2}
\pscircle[fillstyle=solid,fillcolor=blue](4,-2){0.2}
\pscircle[fillstyle=solid,fillcolor=blue](2,-4){0.2}
\pscircle[fillstyle=solid,fillcolor=blue](2,-6){0.2}
\uput[180](2,-6){\scriptsize $(\emptyset,abc)$}
\uput[180](0,-2){\scriptsize $(\emptyset,c)$}
\uput[0](4,-2){\scriptsize $(\emptyset,a)$}
\uput[180](2,-4){\scriptsize $(\emptyset,ac)$}
\rput(2,-8){\scriptsize $\bigcup_{x\in c(L)}L(x)$} 
\endpspicture
\hspace*{4cm}
\psset{unit=0.35cm}
\pspicture(0,-6)(4,8)
\pspolygon(2,0)(4,2)(2,4)(0,2)
\psline(2,4)(2,6)
\psline[linecolor=red](0,2)(-2,0)(0,-2)
\psline[linecolor=red](4,2)(6,0)(4,-2)
\pscircle[fillstyle=solid,fillcolor=red](-2,0){0.2}
\pscircle[fillstyle=solid,fillcolor=red](6,0){0.2}
\uput[180](-2,0){\red\scriptsize $(a,c)$}
\uput[180](6,0){\red\scriptsize $(c,a)$}
\pscircle[fillstyle=solid](2,0){0.2}
\pscircle[fillstyle=solid](0,2){0.2}
\pscircle[fillstyle=solid](4,2){0.2}
\pscircle[fillstyle=solid](2,4){0.2}
\pscircle[fillstyle=solid](2,6){0.2}
\uput[-90](2,0){\scriptsize $(\emptyset,\emptyset)$}
\uput[90](2,6){\scriptsize $(abc,\emptyset)$}
\uput[180](0,2){\scriptsize $(a,\emptyset)$}
\uput[0](4,2){\scriptsize $(c,\emptyset)$}
\uput[180](2,4){\scriptsize $(ac,\emptyset)$}
\pspolygon[linecolor=blue](2,0)(4,-2)(2,-4)(0,-2)
\psline[linecolor=blue](2,-4)(2,-6)
\pscircle[fillstyle=solid,fillcolor=blue](2,0){0.2}
\pscircle[fillstyle=solid,fillcolor=blue](0,-2){0.2}
\pscircle[fillstyle=solid,fillcolor=blue](4,-2){0.2}
\pscircle[fillstyle=solid,fillcolor=blue](2,-4){0.2}
\pscircle[fillstyle=solid,fillcolor=blue](2,-6){0.2}
\uput[180](2,-6){\blue\scriptsize $(\emptyset,abc)$}
\uput[180](0,-2){\blue\scriptsize $(\emptyset,c)$}
\uput[0](4,-2){\blue\scriptsize $(\emptyset,a)$}
\uput[180](2,-4){\blue\scriptsize $(\emptyset,ac)$}
\rput(2,-8){\scriptsize $\widetilde{L}$}
\endpspicture
\end{center}
\caption{Left: bipolar structure computed as a replication of $L$. Right: the
  true bipolar structure}
\label{fig:ex2}
\end{figure}
\end{quote}

\subsection{Bipolar geometric realization}\label{sec:bige}
Since $\widetilde{L}$ is not a distributive lattice, it is not possible to
define its geometric realization in the sense of Def.  \ref{def:geore}. Assuming
that $\widetilde{L}$ is a regular mosaic, we propose the
following definition. 
\begin{definition}\label{def:bipogeore}
Let $\widetilde{L}$ be a regular mosaic, and $x\in c(L)$. We
consider the mappings $\xi_x:\mathcal{J}(L)\rightarrow \{-1,0,1\}$ such that
\begin{enumerate}
\item $|\xi_x|$ is nonincreasing
\item $\xi_x(j) \geq 0$ if $j\in\eta(x)$
\item $\xi_x(j) \leq 0$ if $j\in\eta(x')$.
\end{enumerate}
The set of such functions is denoted by
$\mathcal{D}_x(\mathcal{J}(L))$. Similarly, we introduce
\begin{multline}
\mathcal{C}_x(\mathcal{J}(L)):=\{f_x:\mathcal{J}(L)\rightarrow [-1,1]\text{
  such that }
|f_x| \text{ is nonincreasing},\\
 f_x(j)\geq 0 \text{ if } j\in\eta(x), f_x(j)\leq 0 \text{ if } j\in \eta(x')\}.
\end{multline}
Then the \emph{bipolar geometric realization of $L$} is
\[
\widetilde{|L|}:=\bigcup_{x\in c(L)}\mathcal{C}_x(\mathcal{J}(L)).
\]
\end{definition}
\begin{proposition}
For any $x\in c(L)$,
$\mathcal{D}_x(\mathcal{J}(L))$ is the set of vertices of
$\mathcal{C}_x(\mathcal{J}(L))$.
\end{proposition}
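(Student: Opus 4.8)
The plan is to reduce the statement to the fact recalled just before Definition~\ref{def:geore} — namely that $\mathcal{D}(P)$ is exactly the vertex set of the polytope $\mathcal{C}(P)$ — by exhibiting a linear isomorphism of the ambient space $\mathbb{R}^{\mathcal{J}(L)}$ that carries $\mathcal{C}(\mathcal{J}(L))$ onto $\mathcal{C}_x(\mathcal{J}(L))$ and, simultaneously, $\mathcal{D}(\mathcal{J}(L))$ onto $\mathcal{D}_x(\mathcal{J}(L))$. Since linear isomorphisms preserve the face structure of polytopes, this immediately gives the result.

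First I would record the structural fact underlying Definition~\ref{def:bipogeore}: because $x\in c(L)$ and $L$ is distributive, (\ref{eq:dist}) applied to $x\wedge x'=\bot$ and $x\vee x'=\top$ yields $\eta(x)\cap\eta(x')=\varnothing$ and $\eta(x)\cup\eta(x')=\mathcal{J}(L)$, so $(\eta(x),\eta(x'))$ is a partition of $\mathcal{J}(L)$. Hence there is a well-defined sign vector $s:\mathcal{J}(L)\to\{-1,+1\}$ with $s_j=+1$ iff $j\in\eta(x)$, and I define the linear automorphism $\Phi$ of $\mathbb{R}^{\mathcal{J}(L)}$ by $\Phi(f)(j):=s_j f(j)$; note $\Phi^{-1}=\Phi$.

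The key step is then to verify that $\Phi$ restricts to a bijection $\mathcal{C}_x(\mathcal{J}(L))\to\mathcal{C}(\mathcal{J}(L))$. Indeed, conditions (ii) and (iii) of Definition~\ref{def:bipogeore} say precisely that for every $j$ the value $f_x(j)$ has the same sign as $s_j$ (or is $0$), i.e. $s_j f_x(j)=|f_x(j)|$; therefore $\Phi(f_x)=|f_x|$, which by condition (i) is nonincreasing with values in $[0,1]$, so $\Phi(f_x)\in\mathcal{C}(\mathcal{J}(L))$. Conversely, for $g\in\mathcal{C}(\mathcal{J}(L))$ one has $|\Phi(g)|=g$ (nonincreasing, values in $[0,1]$) and $\Phi(g)(j)=s_jg(j)$ is $\geq 0$ on $\eta(x)$ and $\leq 0$ on $\eta(x')$, so $\Phi(g)\in\mathcal{C}_x(\mathcal{J}(L))$; and $\Phi\circ\Phi=\mathrm{id}$. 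The very same computation restricted to $\{0,1\}$-valued, resp. $\{-1,0,1\}$-valued, maps shows that $\Phi$ also restricts to a bijection $\mathcal{D}(\mathcal{J}(L))\to\mathcal{D}_x(\mathcal{J}(L))$.

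Finally, since $\Phi$ is a linear automorphism of $\mathbb{R}^{\mathcal{J}(L)}$, it maps the polytope $\mathcal{C}(\mathcal{J}(L))$ onto the polytope $\mathcal{C}_x(\mathcal{J}(L))$ and sends extreme points to extreme points bijectively; hence the vertex set of $\mathcal{C}_x(\mathcal{J}(L))$ equals $\Phi$ applied to the vertex set of $\mathcal{C}(\mathcal{J}(L))$, which is $\mathcal{D}(\mathcal{J}(L))$, and whose image under $\Phi$ is $\mathcal{D}_x(\mathcal{J}(L))$. I do not anticipate a real obstacle here; the only point that genuinely uses a hypothesis is the partition property of $(\eta(x),\eta(x'))$, which is exactly where distributivity of $L$ (together with $x$ being complemented) enters, and everything else is the bookkeeping of the sign-flip map.
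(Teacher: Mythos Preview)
Your proof is correct and takes a genuinely different route from the paper's. The paper argues directly: it first notes that every $\xi_x\in\mathcal{D}_x(\mathcal{J}(L))$ is an extreme point (it is a vertex of the ambient cube $[-1,1]^{|\mathcal{J}(L)|}$ lying in $\mathcal{C}_x(\mathcal{J}(L))$), and then shows that any $f_x\in\mathcal{C}_x(\mathcal{J}(L))$ with some coordinate strictly between $0$ and $1$ in absolute value is the midpoint of two nearby elements of $\mathcal{C}_x(\mathcal{J}(L))$ obtained by perturbing that single coordinate by $\pm\epsilon$. You instead transport the whole question back to the unipolar case via the coordinatewise sign-flip $\Phi$, reducing everything to the fact (recalled before Definition~\ref{def:geore}) that $\mathcal{D}(P)$ is the vertex set of $\mathcal{C}(P)$. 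Your approach is more structural and makes explicit the linear isomorphism between $\mathcal{C}_x(\mathcal{J}(L))$ and $\mathcal{C}(\mathcal{J}(L))$ that the paper only uses implicitly later (through $f\mapsto |f|$ and $f\mapsto f_x$); the paper's perturbation argument, by contrast, is entirely self-contained and does not appeal to the vertex description of $\mathcal{C}(P)$.
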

\begin{proof}
It is plain that any $\xi_x$ is a vertex
of $\mathcal{C}_x(\mathcal{J}(L))$. Conversely, assume $f_x$
 is a vertex such that for some $j\in \mathcal{J}(L)$, $f_x(j)=\alpha>0$ (or
 $<0$). Then we define
\[
f^+(j):=f_x(j)+\epsilon, \quad f^-(j):=f_x(j)-\epsilon,
\]
and $f^+=f^-=f_x$ elsewhere, choosing $0<\epsilon<\alpha$ small enough so that
$|f^+|,|f^-|$ remain nonincreasing. Then $f^+,f^-$ belong to
$\mathcal{C}_x(\mathcal{J}(L))$, and $f_x=\frac{1}{2}(f^++f^-)$, which proves
that $f_x$ is not a vertex. 
\end{proof}

\begin{proposition}\label{prop:dlx}
Let $x\in c(L)$. There is a bijection
$\psi_x:\mathcal{D}_x(\mathcal{J}(L))\rightarrow L(x)$ defined by
$\psi_x(\xi):=(y_\xi,z_\xi)$ with
\begin{equation}\label{eq:psi1}
\eta(y_\xi)=\{j\in \mathcal{J}(L)\mid \xi(j)=1\}, \quad \eta(z_\xi)
= \{j\in \mathcal{J}(L)\mid\xi(j)=-1\},
\end{equation}
and the inverse function is defined by $\psi_x^{-1}(y,z):=\xi_{(y,z)}$ with
\begin{equation}\label{eq:psi-1}
\xi_{(y,z)}(j):=\begin{cases}
                                1, & \text{if } j\in \eta(y)\\
                                -1, & \text{if } j\in \eta(z)\\
                                0, & \text{otherwise},
                                \end{cases}
\end{equation}
for any $j\in \mathcal{J}(L)$, or in more compact form
\[
\xi_{(y,z)}= 1_{\eta(y)} -1_{\eta(z)}.
\]
\end{proposition}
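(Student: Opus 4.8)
The plan is to prove Proposition \ref{prop:dlx} by checking directly that the two maps $\psi_x$ and $\psi_x^{-1}$ are well-defined and mutually inverse, relying on the characterizations of $\mathcal{D}_x(\mathcal{J}(L))$ and $L(x)$ already established. First I would verify that $\psi_x$ is well-defined: given $\xi\in\mathcal{D}_x(\mathcal{J}(L))$, set $S_1:=\{j\mid\xi(j)=1\}$ and $S_{-1}:=\{j\mid\xi(j)=-1\}$; I must show there exist $y_\xi,z_\xi\in L$ with $\eta(y_\xi)=S_1$, $\eta(z_\xi)=S_{-1}$, and $(y_\xi,z_\xi)\in L(x)$. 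Since $L$ is distributive, by Birkhoff's theorem it suffices to show $S_1$ and $S_{-1}$ are downsets of $\mathcal{J}(L)$: this follows because $|\xi|$ is nonincreasing, so if $j'\leq j$ and $\xi(j)\neq 0$ then $|\xi(j')|\geq|\xi(j)|=1$, forcing $\xi(j')\neq 0$, and then condition (ii)/(iii) of Definition \ref{def:bipogeore} (together with the fact that $\eta(x),\eta(x')$ partition $\mathcal{J}(L)$, as in the proof of Theorem \ref{th:biex}) pins down the sign of $\xi(j')$ to agree with that of $\xi(j)$. Membership in $L(x)=[(\bot,\bot),(x,x')]$ amounts to $\eta(y_\xi)\subseteq\eta(x)$ and $\eta(z_\xi)\subseteq\eta(x')$, which are conditions (ii) and (iii) respectively.

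Next I would check that $\psi_x^{-1}$ is well-defined: given $(y,z)\in L(x)$, the function $\xi_{(y,z)}=1_{\eta(y)}-1_{\eta(z)}$ is indeed a map into $\{-1,0,1\}$ because $\eta(y)\cap\eta(z)=\varnothing$ (this holds since $y\leq x$, $z\leq x'$ and $\eta(x)\cap\eta(x')=\varnothing$). Conditions (ii) and (iii) of Definition \ref{def:bipogeore} hold since $\eta(y)\subseteq\eta(x)$ and $\eta(z)\subseteq\eta(x')$. For condition (i), that $|\xi_{(y,z)}|=1_{\eta(y)\cup\eta(z)}=1_{\eta(y\vee z)}$ is nonincreasing, I use that $\eta(y\vee z)$ is a downset of $\mathcal{J}(L)$ (by \eqref{eq:dist}), so its indicator is a nonincreasing $\{0,1\}$-valued map on $\mathcal{J}(L)$.

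Finally I would verify $\psi_x\circ\psi_x^{-1}=\mathrm{id}$ and $\psi_x^{-1}\circ\psi_x=\mathrm{id}$. One composition is immediate: starting from $(y,z)\in L(x)$, the element $\psi_x(\xi_{(y,z)})$ has first component $y'$ with $\eta(y')=\{j\mid\xi_{(y,z)}(j)=1\}=\eta(y)$, hence $y'=y$ by the injectivity of $\eta$, and similarly for the second component. The other composition is equally direct: starting from $\xi$, the map $\xi_{\psi_x(\xi)}$ takes value $1$ exactly on $\eta(y_\xi)=\{j\mid\xi(j)=1\}$, value $-1$ exactly on $\eta(z_\xi)=\{j\mid\xi(j)=-1\}$, and $0$ elsewhere, so it equals $\xi$. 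I do not anticipate a serious obstacle here; the only point requiring a little care is the well-definedness of $\psi_x$, namely arguing that the level sets of $\xi$ are genuine downsets and that their union exhausts $\eta(y_\xi\vee z_\xi)$ consistently with the sign constraints — essentially a repackaging of the partition property of $\eta(x),\eta(x')$ used in the proof of Theorem \ref{th:biex} combined with the monotonicity of $|\xi|$.
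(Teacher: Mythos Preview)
Your proposal is correct and follows essentially the same route as the paper: both arguments check that $\psi_x$ and $\psi_x^{-1}$ are well-defined by showing the relevant level sets are downsets (using that $|\xi|$ is nonincreasing together with the sign constraints), and then conclude bijectivity. Your version is in fact slightly more explicit than the paper's, since you verify both compositions directly, whereas the paper wraps up by invoking Birkhoff's theorem to assert that $\psi_x$ is one-to-one.
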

\begin{proof}
  Since $|\xi|$ is nonincreasing, $\{j\in \mathcal{J}(L)\mid \xi(j)=1\}$ and
  $\{j\in \mathcal{J}(L)\mid \xi(j)=-1\}$ are downsets. Hence $y_\xi,z_\xi$ are
  well-defined, and by construction $(y_\xi,z_\xi)\in L(x)$.

  Let us show that $|\xi_{(y,z)}|$ is nonincreasing. Assume $\xi_{(y,z)}(j) = 1 $ or $-1$. Then $j\in \eta(y)\cup\eta(z)$. Since these are
  downsets, any $j'\leq j $ belongs also to $\eta(y)\cup\eta(z)$. Assume
  $\xi_{(y,z)}(j) =0$, i.e., $j\not\in \eta(y)\cup\eta(z)$.
  Then $j'\geq j$ cannot belong to $\eta(y)\cup\eta(z)$ since they are downsets,
  hence $\xi_{(y,z)}(j')=0$ .

  Finally, $\psi_x$ is one-to-one because $L$ is distributive, and so is $L(x)$
  (Birkhoff's theorem). 
\end{proof}
\begin{quote}
\textbf{Example 1 (ctd):} Consider $L=2^N$, and some $N^+\subseteq N$,
$N^-:=N\setminus N^+$. Then
\[
\mathcal{D}_{N^+}(N) = \Big\{\xi_{N^+}:N\rightarrow \{-1,0,1\}\text{ such that }
(\xi_{N^+})_{|N^+}\geq 0, \quad (\xi_{N^+})_{|N^-}\leq 0\Big\}.
\] 
Moreover, $\psi_{N^+}(\xi_{N^+}) = (\{j\in N\mid \xi_{N^+}(j)=1\}, \{j\in
  N\mid \xi_{N^+}(j)=-1\})$. 
\end{quote}

 Figure \ref{fig:all} should make things clear for notions
  introduced till this point.
\begin{figure}[htb]
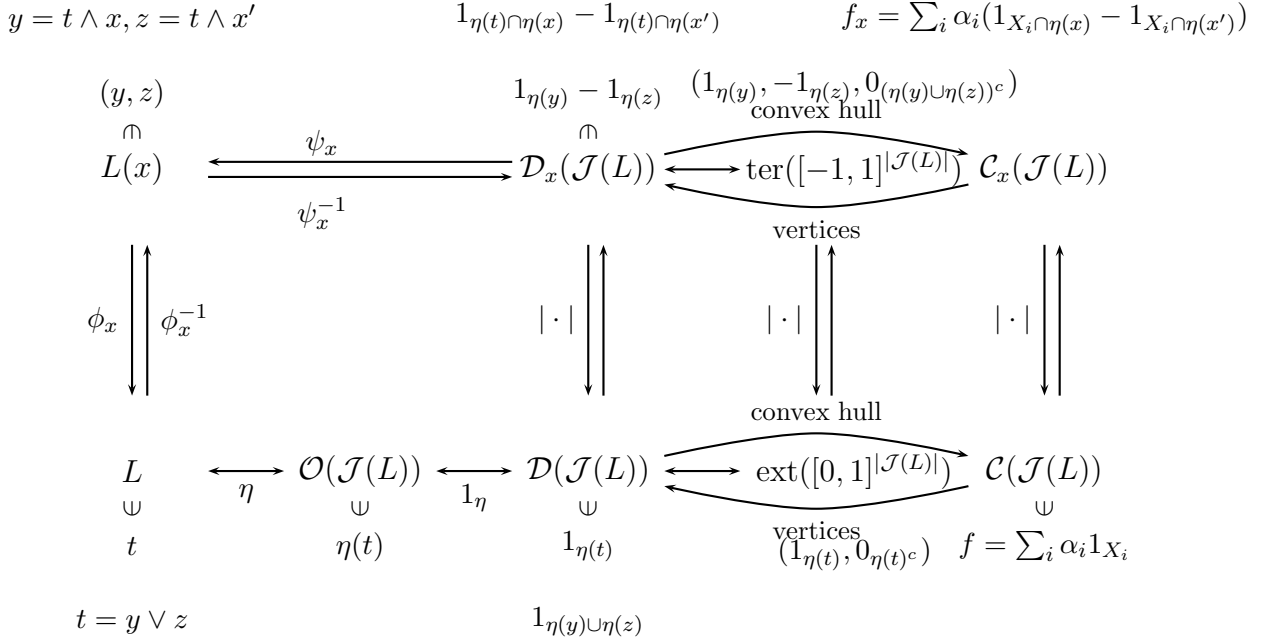

\begin{center}
\psset{unit=1cm}
\pspicture(-2,-2)(14,6)
\psline{<-}(0,1)(0,3)
\psline{->}(0.2,1)(0.2,3)
\psline{<-}(6,1)(6,3)
\psline{->}(6.2,1)(6.2,3)
\psline{<-}(9,1)(9,3)
\psline{->}(9.2,1)(9.2,3)
\psline{<-}(12,1)(12,3)
\psline{->}(12.2,1)(12.2,3)
\psline{<->}(1,0)(2,0)
\psline{<->}(4,0)(5,0)
\psline{<->}(7,0)(8,0)
\pscurve{->}(7,0.2)(9,0.5)(11,0.2)
\pscurve{<-}(7,-0.2)(9,-0.5)(11,-0.2)
\psline{<-}(1,4.1)(5,4.1)
\psline{->}(1,3.9)(5,3.9)
\psline{<->}(7,4)(8,4)
\pscurve{->}(7,4.2)(9,4.5)(11,4.2)
\pscurve{<-}(7,3.8)(9,3.5)(11,3.8)
\rput(0,0){$L$}
\rput(3,0){$\mathcal{O}(\mathcal{J}(L))$}
\rput(6,0){$\mathcal{D}(\mathcal{J}(L))$}
\rput(9.5,0){ext$([0,1]^{|\mathcal{J}(L)|})$}
\rput(12,0){$\mathcal{C}(\mathcal{J}(L))$}
\rput(0,4){$L(x)$}
\rput(6,4){$\mathcal{D}_x(\mathcal{J}(L))$}
\rput(9.5,4){ter$([-1,1]^{|\mathcal{J}(L)|})$}
\rput(12,4){$\mathcal{C}_x(\mathcal{J}(L))$}
\uput[180](0,2){\small $\phi_x$}
\uput[0](0.2,2){\small $\phi_x^{-1}$}
\uput[180](6,2){\small $|\cdot|$}
\uput[180](9,2){\small $|\cdot|$}
\uput[180](12,2){\small $|\cdot|$}
\uput[-90](1.5,0){\small $\eta$}
\uput[-90](4.5,0){\small $1_{\eta}$}
\uput[90](9,0.5){\footnotesize convex hull}
\uput[-90](9,-0.5){\footnotesize vertices}
\uput[90](2.5,4){\small $\psi_x$}
\uput[-90](2.5,3.8){\small $\psi_x^{-1}$}
\uput[90](9,4.5){\footnotesize convex hull}
\uput[-90](9,3.5){\footnotesize vertices}
\rput(0,-0.5){\small \rotateleft{$\in$}}
\rput(0,-1){\small $t$}
\rput(3,-1){\small $\eta(t)$}
\rput(3,-0.5){\small \rotateleft{$\in$}}
\rput(6,-1){\small $1_{\eta(t)}$}
\rput(6,-0.5){\small \rotateleft{$\in$}}
\rput(9.5,-1.1){\small $(1_{\eta(t)},0_{\eta(t)^c})$}
\rput(12,-1){\small $f=\sum_i \alpha_i1_{X_i}$}
\rput(12,-0.5){\small \rotateleft{$\in$}}
\rput(0,-2){\small $t=y\vee z$}
\rput(6,-2){\small $1_{\eta(y)\cup\eta(z)}$}
\rput(0,5){\small $(y,z)$}
\rput(0,4.5){\small \rotateright{$\in$}}
\rput(6,5){\small $1_{\eta(y)}-1_{\eta(z)}$}
\rput(6,4.5){\small \rotateright{$\in$}}
\rput(9.5,5.1){\small $(1_{\eta(y)},-1_{\eta(z)},0_{(\eta(y)\cup\eta(z))^c})$}
\rput(12,6){\small $f_x=\sum_i \alpha_i(1_{X_i\cap \eta(x)}-1_{X_i\cap
    \eta(x')})$} 
\rput(0,6){\small $y=t\wedge x,z=t\wedge x'$}
\rput(6,6){\small $1_{\eta(t)\cap\eta(x)}-1_{\eta(t)\cap\eta(x')}$}
\endpspicture
\caption{Relations among various concepts introduced. $|\cdot|$ indicates
  absolute value, and ter$()$ indicates vectors whose components are $-1$, 1 or
  0.}
\label{fig:all}
\end{center}
\end{figure}
Observe that functions $\xi_x\in\mathcal{D}_x(\mathcal{J}(L))$ corresponds to a
subset of points of $[-1,1]^{|\mathcal{J}(L)|}$ of the form
$(1_A,(-1)_B,0_{(A\cup B)^c})$, with $A\subseteq \eta(x)$ and $B\subseteq
\eta(x')$, and that $\mathcal{C}_x(\mathcal{J}(L))$ is the convex hull of these
points.

\medskip

We end this section by addressing the natural triangulation of the bipolar
geometric realization. Let us consider some $f$ in
$\mathcal{C}(\mathcal{J}(L))$, assuming $f=\sum_{i=0}^p\alpha_i1_{X_i}$, with
$1_{X_0},\ldots,1_{X_p}$ forming a chain in $\mathcal{D}(\mathcal{J}(L))$. Given
$x\in c(L)$, let us define the corresponding $f_x$ in
$\mathcal{C}_x(\mathcal{J}(L))$ as follows:
\begin{align*}
f_x & :=\sum_{i=0}^p \alpha_i\psi_x^{-1}(\phi_x^{-1}(\eta^{-1}(X_i)))\\
 & = \sum_{i=0}^p \alpha_i(1_{X_i\cap\eta(x)} - 1_{X_i\cap\eta(x')}).
\end{align*}
Explicitely, this gives, for any $j\in \mathcal{J}(L)$:
\[
f_x(j) = \begin{cases}
  \sum_{i\mid j\in X_i}\alpha_i, & \text{if } j\in\eta(x)\\
  -\sum_{i\mid j\in X_i}\alpha_i, & \text{if } j\in\eta(x').
  \end{cases}
\]
Hence $|f_x|$ takes value 1 on $X_0$, $1-\alpha_0$ on $X_1\setminus X_0$, etc.,
and is nonincreasing.

Remark that $|f_x|=f$ if $f\in\mathcal{C}(\mathcal{J}(L))$, and $|f|_x=f$ if
$f\in\mathcal{C}_x(\mathcal{J}(L))$.

\subsection{Natural interpolation on bipolar structures}\label{sec:nabi}
Again we suppose that $\widetilde{L}$ is a regular mosaic.
Assume $F:\bigcup_{x\in c(L)}\mathcal{D}_x(\mathcal{J}(L))\rightarrow \mathbb{R}$
is given. We want to define the extension $\overline{F}$ of this functional on the
bipolar geometric realization $\widetilde{|L|}$. 

Let us take $f\in\widetilde{|L|}=\bigcup_{x\in
  c(L)}\mathcal{C}_x(\mathcal{J}(L))$.  First, we must choose $x\in c(L)$ such
  that $f$ belongs to $\mathcal{C}_x(\mathcal{J}(L))$ ($x$ is not unique in
  general since in the definition of $\widetilde{|L|}$ the union is not
  disjoint (see Def. \ref{def:bipogeore})). Defining
\[
\mathcal{J}(L)^+:=\{j\in \mathcal{J}(L)\mid f(j)\geq 0\}, \quad
\mathcal{J}(L)^-:= \mathcal{J}(L)\setminus \mathcal{J}(L)^+,
\] 
it suffices to take $x,x'$ defined by
\[
\eta(x):=\bigcup_{k\in K}J_k, \quad \eta(x'):= \mathcal{J}(L)\setminus \eta(x)
\]
with $K$ the smallest one such that $\mathcal{J}(L)^+\subseteq \bigcup_{k\in
  K}J_k$ (using notations of proof of Theorem~\ref{th:biex}).
Now, consider $|f|$, which belongs to $\mathcal{C}(\mathcal{J}(L))$, and its
expression using the natural triangulation:
\[
|f|=\sum_{i=0}^p \alpha_i 1_{X_i} 
\]
with $1_{X_0},\ldots,1_{X_p}$ a chain in $\mathcal{D}(\mathcal{J}(L))$. Then we
have $|f|_x=f$, and we propose the following definition.
\begin{definition}\label{def:bina}
Assume $\widetilde{L}$ is a regular mosaic. 
For any functional $F:\bigcup_{x\in c(L)}\mathcal{D}_x(\mathcal{J}(L))\rightarrow
\mathbb{R}$, its
\emph{natural extension to the bipolar geometric realization} of $\widetilde{L}$
is defined by:
\[
\overline{F}(f) :=\sum_{i=0}^p\alpha_i F_x(1_{X_i})
\]
for all $f\in\mathcal{C}_x(\mathcal{J}(L))$, letting $|f|:=\sum_{i=0}^p \alpha_i
1_{X_i}$ for some chain $\{1_{X_0}<1_{X_1}<\cdots<1_{X_p}\}$ in
$\mathcal{D}(\mathcal{J}(L))$, and $F_x:\mathcal{D}(\mathcal{J}(L))\rightarrow
\mathbb{R}$ defined by:
\[
F_x(1_{X_i}):=F(1_{X_i\cap\eta(x)} - 1_{X_i\cap\eta(x')}).
\]
\end{definition}
\begin{quote}
\textbf{Example (end):} Let us take once more $L=2^N$. 
For a given  $f$, we define  $N^+:=\{j\in N\mid f(j)\geq 0\}$ and
$N^-:=N\setminus N^+$, we have:
\[
\overline{F}(f) = \sum_{i=1}^n\alpha_iF_{N^+}(1_{X_i})
= \sum_{i=1}^n\big[|f(\pi(i))| - |f(\pi(i+1))|\big]F(1_{X_i\cap
N^+}-1_{X_i\cap N^-}),
\]
where we have used (\ref{eq:alpha}).  Putting $v(A,B):=F(1_A-1_B)$, we recognize
the Choquet integral for bicapacities (see Definition \ref{def:bicho}). $\Box$
\end{quote}

\begin{remark}
Definition \ref{def:bina} can be written equivalently as $\overline{F}(f) =
\overline{F_x}(|f|)$, making clear the relation between the functional on $L$
and on $\widetilde{L}$. 
\end{remark}

Lastly, we address the problem of expressing $\overline{F}$ in terms of the
M\"obius transform of $F$, using Prop. \ref{prop:mob}. For this purpose, it is
better to turn a given functional $F$ on $\bigcup_{x\in
c(L)}\mathcal{D}_x(\mathcal{J}(L))$ into its equivalent form $\widetilde{F}$
defined on $\widetilde{L}$, thanks to the mappings $\psi_x$, $x\in c(L)$. Doing
so, we can use Prop. \ref{prop:mob} and (\ref{eq:bms}), and get the M\"obius
transform of $\widetilde{F}$, which we denote by $\widetilde{m}$:
\[
\widetilde{m}(x,y)=\sum_{\substack{(z,t)\leq
    (x,y)\\(z,t)\in\widetilde{L}}}\widetilde{F}(z,t)\mu_L(z,x)\mu_L(t,y),
    \quad\forall (x,y)\in \widetilde{L}.
\]
We need the following result, which is a generalization of (\ref{eq:musu}).
\begin{lemma}
Let $f\in\mathcal{C}(\mathcal{J}(L))$ and
$F:\mathcal{D}(\mathcal{J}(L))\rightarrow\{0,1\}$ being nondecreasing and 0-1
valued. Then
\[
\overline{F}(f) = \bigvee_{\substack{T\subseteq
    \mathcal{J}(L)\\F(1_T)=1}}\bigwedge_{j\in T} f(j).
\]
\end{lemma}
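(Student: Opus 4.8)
The plan is to reduce the statement to the known formula (\ref{eq:musu}) via the isomorphism $\eta:L\to\mathcal{O}(\mathcal{J}(L))$, but since $F$ is defined on $\mathcal{D}(\mathcal{J}(L))$ directly and $\mathcal{J}(L)$ need not be an antichain, the cleanest route is to argue directly from the natural triangulation. First I would fix $f\in\mathcal{C}(\mathcal{J}(L))$, pick a maximal chain $C:=\{1_{X_0}=0\prec 1_{X_1}\prec\cdots\prec 1_{X_n}=1\}$ with $f\in\sigma(C)$, and write $f=\sum_{i=0}^n\alpha_i1_{X_i}$ with the $\alpha_i$ given by Proposition \ref{prop:rem2}(3), so that by Proposition \ref{prop:rem3}(1),
\[
\overline{F}(f)=\sum_{i=1}^n[f(\pi(i))-f(\pi(i+1))]F(1_{X_i}),
\]
recalling $f(\pi(1))\geq\cdots\geq f(\pi(n))\geq f(\pi(n+1)):=0$ and $X_i=\{\pi(1),\ldots,\pi(i)\}$. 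Since $F$ is $\{0,1\}$-valued and nondecreasing along the chain, there is a threshold index $i^\ast$ with $F(1_{X_i})=0$ for $i<i^\ast$ and $F(1_{X_i})=1$ for $i\geq i^\ast$; the telescoping sum then collapses to $\overline{F}(f)=f(\pi(i^\ast))=\bigwedge_{j\in X_{i^\ast}}f(j)$, with the convention that if $F\equiv 0$ on the chain the value is $0$.

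Next I would identify this value with the right-hand side $\bigvee_{T:\,F(1_T)=1}\bigwedge_{j\in T}f(j)$. For the inequality $\overline{F}(f)\leq\mathrm{RHS}$: take $T:=X_{i^\ast}$, which satisfies $F(1_{X_{i^\ast}})=1$, giving $\bigwedge_{j\in T}f(j)=f(\pi(i^\ast))=\overline{F}(f)$, so $\overline{F}(f)$ is one of the terms in the supremum. For the reverse inequality $\overline{F}(f)\geq\bigwedge_{j\in T}f(j)$ for every $T$ with $F(1_T)=1$: I must show $\bigwedge_{j\in T}f(j)\leq f(\pi(i^\ast))$. This is where monotonicity and the chain condition are used. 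Since $1_T\in\mathcal{D}(\mathcal{J}(L))$, $T$ is a downset of $\mathcal{J}(L)$. The key combinatorial point is that $T$ cannot be contained in $X_{i^\ast-1}$: if it were, then by monotonicity of $F$ we would have $F(1_{X_{i^\ast-1}})\geq F(1_T)=1$, contradicting $i^\ast$ being the threshold (here I use that $X_{i^\ast-1}\supseteq T$ implies $1_{X_{i^\ast-1}}\geq 1_T$ in $\mathcal{D}(\mathcal{J}(L))$, hence $F(1_{X_{i^\ast-1}})\geq F(1_T)$). Therefore $T\not\subseteq X_{i^\ast-1}$, so $T$ contains some $\pi(k)$ with $k\geq i^\ast$, whence $\bigwedge_{j\in T}f(j)\leq f(\pi(k))\leq f(\pi(i^\ast))=\overline{F}(f)$, using that $f\circ\pi$ is nonincreasing.

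Combining the two inequalities yields $\overline{F}(f)=\bigvee_{T:\,F(1_T)=1}\bigwedge_{j\in T}f(j)$ when the set $\{T\mid F(1_T)=1\}$ is nonempty; and when it is empty, one checks $F\equiv 0$ forces $\overline{F}(f)=0$ and the empty supremum is $0$ by the usual convention (or one notes $F(1_N)=F(\top)$ must be $0$, degenerate case), so both sides agree. The main obstacle I anticipate is the downset-containment argument $T\not\subseteq X_{i^\ast-1}$: one must be careful that $1_T\le 1_{X_{i^\ast-1}}$ as elements of $\mathcal{D}(\mathcal{J}(L))$ is equivalent to $T\subseteq X_{i^\ast-1}$ as subsets (both are downsets here, and the order on $\mathcal{D}$ is pointwise), and that monotonicity of $F$ is with respect to exactly this order. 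Everything else is the routine telescoping already packaged in Proposition \ref{prop:rem3}.
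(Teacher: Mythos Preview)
Your proposal is correct and follows essentially the same approach as the paper's proof: both use the formula (\ref{eq:natin}), identify a threshold index along the chain at which $F$ switches from $0$ to $1$, and show that the telescoping sum collapses to $f(\pi(\cdot))$ at that threshold, which is then matched to the max--min expression. The only cosmetic difference is the direction of the identification: the paper first defines $i_0$ so that $f(\pi(i_0))$ equals the right-hand side and then proves $i_0$ is the threshold, whereas you first locate the threshold $i^\ast$ from monotonicity of $F$ and then prove $f(\pi(i^\ast))$ equals the right-hand side; the containment argument ($T\not\subseteq X_{i^\ast-1}$ else monotonicity is violated) is the same in both.
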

\begin{proof}
(adaptation from \cite{musu93}) Using notations of Proposition \ref{prop:rem2}, define
  $i_0\in\mathcal{J}(L)$  such that 
\[
f(\pi(i_0))=\bigvee_{\substack{T\subseteq
    \mathcal{J}(L)\\F(1_T)=1}}\bigwedge_{j\in T} f(j).
\]
Assume for simplicity that $f(\pi(1))>f(\pi(2))>\cdots>f(\pi(n))$, and let us
show that
\[
F(1_{\{\pi(1),\ldots,\pi(i)\}})=\begin{cases}
    1, &\text{if } i\geq i_0\\
    0, &\text{else.}
    \end{cases}
\]
Assume $i\geq i_0$. Then for any $T\subseteq\mathcal{J}(L)$ such that
$F(1_T)=1$, we have $f(\pi(i))\leq\wedge_{j\in T}f(j)$. This inequality implies
that $T\subseteq\{\pi(1),\ldots,\pi(i)\}$, and hence by monotonicity of $F$,
we get $F(1_{\{\pi(1),\ldots,\pi(i)\}})=1$. Now suppose $i<i_0$. If
$F(1_{\pi(1),\ldots,\pi(i)})=1$, it follows that $f(\pi(i))>f(\pi(i_0))\geq
\wedge_{j=1}^if(\pi(j))= f(\pi(i))$, a contradiction. Hence
$F(1_{\{\pi(1),\ldots,\pi(i)\}})=0$. 

Using this result in (\ref{eq:natin}) gives the desired result.
\end{proof}

The following is a generalization of (\ref{eq:2}).
\begin{proposition}
With the above notations, for any $f\in\widetilde{|L|}$ and any $F$ on
$\bigcup_{x\in c(L)}\mathcal{D}_x(\mathcal{J}(L))$, the following holds:
\[
\overline{F}(f) = \sum_{(s,t)\in\widetilde{L}}\widetilde{m}(s,t)\Big[\bigwedge_{j\in\eta(s)}f^+(j)\wedge\bigwedge_{j\in\eta(t)}f^-(j)\Big],
\]
with $f^+=f\vee 0$, $f^-=(-f)^+$.
\end{proposition}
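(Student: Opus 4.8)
The plan is to reduce the bipolar formula to the unipolar Choquet-integral / Möbius formula \eqref{eq:chom}–\eqref{eq:2} via the machinery already set up, namely the identity $\overline{F}(f)=\overline{F_x}(|f|)$ of the Remark, the relation between $\widetilde{F}$ and $F_x$ through the isomorphisms $\psi_x,\phi_x$, and Proposition~\ref{prop:mob} for the Möbius function on $\widetilde{L}$. So the first step is: fix $f\in\widetilde{|L|}$ and choose $x\in c(L)$ with $f\in\mathcal{C}_x(\mathcal{J}(L))$ as in the discussion preceding Definition~\ref{def:bina} (so that $\mathcal{J}(L)^+\subseteq\eta(x)$ and $\mathcal{J}(L)^-\subseteq\eta(x')$). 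Then by the Remark, $\overline{F}(f)=\overline{F_x}(|f|)$, where $F_x$ is the functional on $\mathcal{D}(\mathcal{J}(L))$ given by $F_x(1_T)=F(1_{T\cap\eta(x)}-1_{T\cap\eta(x')})=\widetilde{F}(\phi_x^{-1}(\eta^{-1}(T)))$.

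**Second**, I would apply the unipolar Möbius expression of the natural extension to $\overline{F_x}(|f|)$. Writing $m_x$ for the Möbius transform of $F_x$ on the (distributive) lattice $L$ — more precisely on $\mathcal{O}(\mathcal{J}(L))$ — the analogue of \eqref{eq:chom} (which follows from the Lemma just proved by expanding $F_x$ in the basis of unanimity games and using linearity of $\overline{\;\cdot\;}$ in $F$, cf. Proposition~\ref{prop:rem3}(ii)) gives
\[
\overline{F_x}(|f|)=\sum_{T\subseteq\mathcal{J}(L)} m_x(T)\bigwedge_{j\in T}|f|(j).
\]
**Third**, I would translate $m_x$ back into $\widetilde{m}$. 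Since $F_x = \widetilde{F}\circ\phi_x^{-1}\circ\eta^{-1}$ and $\phi_x$ is an order isomorphism $L(x)\to L$ carrying $\mathcal{J}(L(x))$ to $\mathcal{J}(L)$, the Möbius function is preserved, so $m_x(T)=\widetilde{m}(\phi_x^{-1}(\eta^{-1}(T)))=\widetilde{m}(s,t)$ where $(s,t)$ is the element of $L(x)$ with $\eta(s)=T\cap\eta(x)$, $\eta(t)=T\cap\eta(x')$. Substituting and splitting $T=\eta(s)\cup\eta(t)$ (disjoint union) turns the sum over $T\subseteq\mathcal{J}(L)$ into a sum over $(s,t)\in L(x)$. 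Finally, for such $(s,t)$ one has $\bigwedge_{j\in T}|f|(j)=\bigwedge_{j\in\eta(s)}|f|(j)\wedge\bigwedge_{j\in\eta(t)}|f|(j)$, and because $\eta(s)\subseteq\eta(x)$ consists of join-irreducibles where $f\ge 0$ while $\eta(t)\subseteq\eta(x')$ consists of those where $f\le 0$, we get $|f|(j)=f^+(j)$ for $j\in\eta(s)$ and $|f|(j)=f^-(j)$ for $j\in\eta(t)$. The last point to settle is that terms with $(s,t)\in\widetilde{L}\setminus L(x)$ contribute nothing: for such $(s,t)$ either $\eta(s)\not\subseteq\eta(x)$ or $\eta(t)\not\subseteq\eta(x')$, which — by the minimality of the chosen $K$ and the regular-mosaic structure \eqref{eq:tile} — forces some $j$ in the meet with $f^+(j)=0$ or $f^-(j)=0$, killing that bracket; hence the sum may be extended from $L(x)$ to all of $\widetilde{L}$ without change, yielding the claimed formula.

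**The main obstacle** I expect is precisely this last bookkeeping step — showing that enlarging the index set from $L(x)$ to $\widetilde{L}$ is harmless, i.e. that the extra terms have vanishing meet-bracket. This needs the regular-mosaic hypothesis and a careful argument that if $\eta(s)$ meets a component $J_k$ not contained in $\eta(x)$, then $J_k\subseteq\eta(x')$, so $s$ has a join-irreducible $j$ with $f(j)\le 0$, whence $f^+(j)=0$; and symmetrically for $t$. A secondary subtlety is checking that $\widetilde{m}$ restricted to $L(x)$ genuinely coincides with the Möbius transform $m_x$ of $F_x$ — this is where Proposition~\ref{prop:mob} and the fact that $\mu$ depends only on the order (which $\phi_x$ preserves) are used; one should be mildly careful that $\widetilde{F}$ and $F$ correspond under $\psi_x$ consistently across the different tiles. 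Everything else is routine substitution and the already-established Lemma/Propositions.
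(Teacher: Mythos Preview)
Your plan is correct and complete, but it is organised differently from the paper's argument. The paper does not pass through a unipolar M\"obius formula on $L$ at all: it expands $\widetilde{F}$ directly in the unanimity basis $\{u_{(s,t)}\}_{(s,t)\in\widetilde{L}}$, notes that for each $(s,t)$ the induced $F_x$ is $0$--$1$ valued and nondecreasing (with $F_x(1_T)=1$ iff $\eta(s)\cup\eta(t)\subseteq T$ \emph{and} $s\leq x$, $t\leq x'$), and applies the Lemma once to get the bracket. Your route instead establishes the intermediate identity $\overline{F_x}(|f|)=\sum_{w\in L} m_x(w)\bigwedge_{j\in\eta(w)}|f|(j)$, then identifies $m_x$ with $\widetilde{m}\restriction_{L(x)}$ via the order isomorphism $\phi_x$ (here you need that $L(x)$ is a principal order ideal of $\widetilde{L}$, so the M\"obius transform computed in $L(x)$ agrees with $\widetilde{m}$ restricted there), and finally extends the sum. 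Both arguments rely on the same Lemma and on linearity in $F$; the paper's version is a bit shorter because it skips the unipolar M\"obius identity and the $m_x\leftrightarrow\widetilde{m}$ identification, while yours makes the tile decomposition $\widetilde{L}=\bigcup L(x)$ more visibly do the work.

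One remark on your ``main obstacle'': it is easier than you anticipate. You do not need minimality of $K$ or the component structure. The membership $f\in\mathcal{C}_x(\mathcal{J}(L))$ already forces $f(j)\geq 0$ for $j\in\eta(x)$ and $f(j)\leq 0$ for $j\in\eta(x')$, hence $f^-\restriction_{\eta(x)}=0$ and $f^+\restriction_{\eta(x')}=0$. So if $(s,t)\notin L(x)$ then either $\eta(s)\cap\eta(x')\neq\varnothing$ (killing the bracket via $f^+$) or $\eta(t)\cap\eta(x)\neq\varnothing$ (killing it via $f^-$), and in the surviving case $(s,t)\in L(x)$ one has $|f|=f^+$ on $\eta(s)$ and $|f|=f^-$ on $\eta(t)$ for the same reason.
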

\begin{proof}
Taking $\widetilde{F}:=u_{(s,t)}$ given by (\ref{eq:ug}), we have
by Definition \ref{def:bina}
\[
\overline{F}(f) = \overline{F_x}(|f|)
\] 
with $\widetilde{F_x}(y)=u_{(s,t)}(y\wedge x,y\wedge x')$ a nondecreasing 0-1
valued function, with value 1 iff $y\wedge x\geq s$ and $z\wedge x'\geq
t$. Since $L$ is distributive, this condition writes [$\eta(y)\cap
\eta(x)\supseteq \eta(s)$ and $\eta(z)\cap \eta(x')\supseteq \eta(t)$], which in
turn is equivalent to [$\eta(y)\supseteq \eta(s)\cup\eta(t)$ and
$\eta(s)\subseteq \eta(x)$ and $\eta(t)\subseteq \eta(x')$] since $x,x'$ are
complemented. Hence, applying the above lemma, we get:
\begin{align*}
\overline{F_x}(|f|)  & = \bigvee_{\substack{y\geq s\vee t\\s\leq x\\t\leq
                    x'}}\bigwedge_{j\in \eta(y)} |f(j)|\\ 
                    & = \bigwedge_{j\in\eta(s)}f^+(j)\wedge\bigwedge_{j\in
                    \eta(t)}f^-(j).
\end{align*}
Using linearity of $\overline{F_x}$ versus $F_x$ (see Proposition
\ref{prop:rem3} (ii)) and the decomposition of any $\widetilde{F}$ in the basis
of functions $u_{(x,y)}$, the result is proved.
\end{proof}

\section{Application: $k$-ary bicapacities and Choquet integral}\label{sec:kary}
This section is dedicated to the study of the lattice $L:=k^n$.  We set
$N:=\{1,\ldots,n\}$. Elements of $L$ are thus vectors in
$\{0,1,\ldots,k-1\}^n$. For commodity $(l_A,l'_{-A})$ denotes the element $t$ of
$L$ with $t_i=l$ if $i\in A$ and $l'$ otherwise, and we put $M:=n(k-1)$.

We begin by giving a motivation of this study rooted in multicriteria decision
making.

\subsection{Multicriteria aggregation with reference levels}\label{Sinterpret}
Let us consider $N$ as the set of criteria. In the terminology of multicriteria
decision making, an \emph{act} or \emph{option} is a mapping $f:N\rightarrow
\mathbb{R}$, and $f(i)$ is the \emph{score} of option $f$ on criteria $i$. We
may introduce reference levels for scores, and be interested into the overall
score of an option taking values only in the set of reference values (such options
are called \emph{pure}, or \emph{prototypical}). Since these options are
prototypical, the decision maker is able to assess their overall scores. The question
arises then to compute the overall score of an option being not pure. Using our
framework, there are basically two ways of answering this question. We put
$L:=k^n$, where $k$ is the number of reference levels, labelled
$\{0,1,\ldots,k-1\}$. Observe that join-irreducible elements are of the form
$(l_i,0_{-i})$, for any $l\in\{1,\ldots,k-1\}$ and $i\in N$ (see below).

\medskip

The first way is to say that non-pure options belong to a level only to some
degree that can be different from the complete membership and the complete
non-membership.  Thus, as in Fuzzy Set Theory \cite{zad65}, a membership degree
is associated to each level and each criterion, i.e., to each join-irreducible
element.  From a knowledge of these degrees, it is possible to interpolate
between the values known for pure options.

More precisely, an option is an element of $\mathcal{C}(\mathcal{J}(L))$.  A
degree in $[0,1]$ is thus associated to all join-irreducible elements.  It can
be interpreted as a membership degree to the class of levels lower or equal to
the join-irreducible element.  Hence if an option belongs at a given degree
$\delta$ to a join-irreducible element, it necessarily belongs to a degree
greater or equal to $\delta$ to less preferred join-irreducible elements.  This
explains why options shall be non-increasing functions on $\mathcal{J}(L)$.

\medskip

The second way is to map the lattice onto a subset of $\mathbb{R}^n$ such that
the Pareto order on $\mathbb{R}^n$ corresponds precisely to the order relation
on the lattice $k^n$. For this, we map each reference level on $\mathbb{R}$:
$\rho_0<\cdots<\rho_{k-1}$, which represent the score assigned to each level.
The lattice corresponds to the nodes of a rectangular mesh in $\mathbb{R}^n$ composed
of the $k$ reference levels for each criterion. The generalized capacity gives
the value associated to these nodes (i.e., the pure options). The non-pure
options are any point inside the mesh. The problem becomes thus an interpolation
problem in $\mathbb{R}^n$.

Consider thus an option $x\in [\rho_0,\rho_{k-1}]^n$ and a generalized capacity 
$F:\mathcal{D}(\mathcal{J}(L))\rightarrow\mathbb{R}$. 
Let $I(x)\in \{1,\ldots,k\}^n$ such that for any $i\in N$
\[ \rho_{I_i(x)-1} \leq x_i \leq  \rho_{I_i(x)}.
\]
Define $\Phi:[\rho_0,\rho_{k-1}]^n \rightarrow [0,1]^n$ as
\[ \Phi_i(x) := \frac{x_i-\rho_{I_i(x)-1}}{\rho_{I_i(x)} - \rho_{I_i(x)-1}}.
\]
Define a capacity $v_x$ on $N$ by
\[ v_x(S) := F\Big(1_{\bigcup_{i\in N} \big\{(1_i,0_{-i}),\ldots,((I_i(x)-1)_i,0_{-i})\big\}
  \cup \bigcup_{i\in S}((I_i(x))_i,0_{-i}) }\Big)
\]
for all $S\subseteq N$.  It corresponds to the value on the $2^n$ nodes of the
mesh just around $x$.  One may have $v_x(\emptyset)\not=0$. Let
$v'_x(S)=v_x(S)-v_x(\emptyset)$ and $\eta$ a permutation on $N$ such that
$\Phi_{\eta(1)}(x) \geq \cdots\geq \Phi_{\eta(n)}(x)$.
\begin{align}
   v_x(\emptyset) + C_{v'_x}(\Phi(x))  &
  = v_x(\emptyset) + \sum_{i=1}^n (\Phi_{\eta(i)}(x)-\Phi_{\eta(i+1)}(x))
   \: (v_x(\{\eta(1),\ldots,\eta(i)\}) - v_x(\emptyset)) \nonumber \\
 & = v_x(\emptyset) \: (1-\Phi_{\eta(1)}(x)) + \sum_{i=1}^n (\Phi_{\eta(i)}(x)-\Phi_{\eta(i+1)}(x))
   \: v_x(\{\eta(1),\ldots,\eta(i)\}).  \label{E1}
\end{align}

\subsection{The unipolar case}\label{sec:unik}
The set $\mathcal{J}(L)$ of join-irreducible elements is
\[
\mathcal{J}(L) = \Big\{(l_i,0_{-i}) \mid l\in\{1,\ldots,k-1\}, i\in N\Big\}.
\]
It is a poset with $n$ connected components, each of them being the linear
lattice \linebreak $\{1,\ldots k-1\}$.

Let us consider $f$ an element of $\mathcal{C}(\mathcal{J}(L))$. We set for
commodity $f_i^l:=f(l_i,0_{-i})$. The natural triangulation of
$\mathcal{C}(\mathcal{J}(L))$ is done through chains in
$\mathcal{D}(\mathcal{J}(L))$, and maximal chains correspond to some
permutations on $\mathcal{J}(L)$ (see Proposition \ref{prop:rem2}). For
commodity to each permutation $\pi:\{1,\ldots,M\}\rightarrow
\mathcal{J}(L)$ we assign two functions $\lambda:\{1,\ldots,M\}\rightarrow
\{1,\ldots,k-1\}$ and $\theta:\{1,\ldots,M\}\rightarrow \{1,\ldots,n\}$ such
that $\pi(i)= (\lambda(i)_{\theta(i)},0_{-\theta(i)})$, for all
$i\in\{1,\ldots,M\}$.

Applying Proposition \ref{prop:rem2} again, we know that for any element $f$ of
a simplex of $\mathcal{C}(\mathcal{J}(L))$ corresponding to a permutation $\pi$
on $\mathcal{J}(L)$, we have
\[ f_{\theta(1)}^{\lambda(1)} \geq f_{\theta(2)}^{\lambda(2)} \geq \cdots \geq f_{\theta(M)}^{\lambda(M)} 
\]
and 
\[ f(l_p,0_{-p}) = \sum_{i\in \{1,\ldots,M\}}
   \alpha_i 1_{X_i}(l_p,0_{-p})
\]
where
$X_i:=\{(\lambda(1)_{\theta(1)},0_{-\theta(1)}),\ldots,(\lambda(i)_{\theta(i)},0_{-\theta(i)})\}$,
$\alpha_i = f_{\theta(i)}^{\lambda(i)}-f_{\theta(i+1)}^{\lambda(i+1)}$ for
\linebreak $i\in \{1,\ldots,M-1\}$, and $\alpha_{M}=f_{\theta(M)}^{\lambda(M)}$.

A \emph{$k$-ary capacity} is a function
$F:\mathcal{D}(\mathcal{J}(L))\rightarrow \mathbb{R}$. Applying Proposition
\ref{prop:rem3} the natural extension of $f\in \mathcal{C}(\mathcal{J}(L))$
w.r.t. $F$ is
\[ \overline{F}(f) = \sum_{i=1}^{M} \Big[f_{\theta(i)}^{\lambda(i)}-f_{\theta(i+1)}^{\lambda(i+1)}\Big]
  \times
  F\Big(1_{\{(\lambda(1)_{\theta(1)},0_{-\theta(1)}),\ldots,(\lambda(i)_{\theta(i)},0_{-\theta(i)})\}}\Big),
\]
with $f^{\lambda(M+1)}_{\theta(M+1)}:=0$.
This could be considered as the Choquet integral of $f$ w.r.t $F$.

To recover the interpolation formula (\ref{E1}) of Section \ref{Sinterpret}, we
consider a particular class of elements $f$ in $\mathcal{C}(\mathcal{J}(L))$
satisfying for all $i\in N$
\begin{align*}
& f_i^1 = \cdots = f_i^{J_i(f)-1} = 1\\
& f_i^{J_i(f)} = z_i\\
& f_i^{J_i(f)+1} = \cdots = f_i^{k-1} = 0,
\end{align*}
for some given integers $J_1(f),\ldots,J_n(f)$ in
$\{1,\ldots,k-1\}$, and real numbers $z_1,\ldots,z_n\in[0,1]$. 

Let us denote by $\sigma$ a permutation on $N$ such that
\[ z_{\sigma(1)}\geq \cdots \geq z_{\sigma(n)}.
\]
Remark that $f$ belongs to all $M$-dimensional simplices of
$\mathcal{C}(\mathcal{J}(L))$ whose corresponding permutation satisfy:
\begin{align*}
\forall i\in \{1,\ldots,q^f\} , & \quad f_{\theta(i)}^{\lambda(i)}=1 \\
\forall i\in \{q^f+1,\ldots,q^f+n\}, &  \quad f_{\theta(i)}^{\lambda(i)}=z_{\sigma(i-q^f)} \\
\forall i\in \{q^f+n+1,\ldots M\}, & \quad  f_{\theta(i)}^{\lambda(i)}=0
\end{align*}
where $q^f=\sum_{i\in N} (J_i(f)-1)$. Hence, $f$ belongs to the interior of a
$n$-dimensional simplex corresponding to the chain
\[
 1_{X_{q^f}} < 1_{X_{q^f}\cup\{((J_{\sigma(1)}(f))_{\sigma(1)},0_{-\sigma(1)})\}} <\cdots < 1_{X_{q^f}\cup\{((J_{\sigma(1)}(f))_{\sigma(1)},0_{-\sigma(1)}),\ldots,((J_{\sigma(n)}(f))_{\sigma(n)},0_{-\sigma(n)})\}},
\]
with $X_{q^f}:=\{(l_i,0_{-i})\mid 1\leq l_i\leq J_i(f)-1\}$. Then
\begin{equation}
  \overline{F}(f) = (1-z_{\sigma(1)}) \: F(1_{X_{q^f}})
  + \sum_{i=1}^n (z_{\sigma(i)}-z_{\sigma(i+1)}) \: F(1_{X_{q^f}\cup 
  \{ ((J_{\sigma(1)}(f))_{\sigma(1)},0_{-\sigma(1)}),\ldots,
     ((J_{\sigma(i)}(f))_{\sigma(i)},0_{-\sigma(i)}) \} })
\label{E2}
\end{equation}
with $z_{\sigma(n+1)}:=0$.
Let $x\in [\rho_0,\rho_{k-1}]^n$ defined by
\[ x_i := \rho_{J_i(f)-1} + (\rho_{J_i(f)}-\rho_{J_i(f)-1})\times z_i \ .
\]
Then expression (\ref{E1}) and (\ref{E2}) lead to exactly the same value since
$J(f)=I(x)$, $\sigma=\eta$, $z=\Phi(x)$ and
\[ v_x(S) := F(1_{X_{q^f} \: \cup \: \bigcup_{i\in S} ((J_i(f))_i,0_{-i})}) \ .
\]

\subsection{The bipolar case}\label{sec:bipok}
The bipolarization of $L$ is
\[ \tilde{L} = \Big\{ (x,y)\in k^n \times k^n \mid \forall i\in N,\quad x_i\not=0  \Rightarrow y_i=0,
 \mbox{ and } y_i\not=0 \Rightarrow x_i=0 \Big\}.
\]
Moreover, the set of complemented elements is
\[ c(L) = \Big\{ ((k-1)_A,0_{-A}) \mid  A\subseteq N\Big\},
\]
and $((k-1)_A,0_{- A})'=(0_A,(k-1)_{- A})$. Note that $\tilde{L}$ is a regular
mosaic, hence Theorem~\ref{th:biex} applies and $\tilde{L}$ is the union of all $L(x)$, with $x\in c(L)$, and
\[ L((k-1)_A,0_{-A}) = 
  \Big\{\big((x_A,0_{-A}),(0_A,y_{-A})\big)\mid
  x\in\{0,\ldots,k-1\}^{|A|},y\in\{0,\ldots,k-1\}^{n-|A|}\Big\}. 
\]
Let $f\in \mathcal{C}_{((k-1)_A,0_{-A})} (\mathcal{J}(L))$, and
$f_i^l:=f(l_i,0_{-i})$. We have $f_i^l \geq 0$ if $i\in A$ and
$f_i^l \leq 0$ if $i\not\in A$.

We consider a simplex of  $\mathcal{C}_{((k-1)_A,0_{-A})} (\mathcal{J}(L))$
containing $f$, whose corresponding permutation is
$\pi:\{1,\ldots,M\}\rightarrow \mathcal{J}(L)$, and we define as in Section
\ref{sec:unik} the functions $\lambda:\{1,\ldots,M\}\rightarrow
\{1,\ldots,k-1\}$, and $\theta:\{1,\ldots,M\}\rightarrow \{1,\ldots,n\}$.
Then
\[ \Big|f_{\theta(1)}^{\lambda(1)}\Big| \geq \Big|f_{\theta(2)}^{\lambda(2)}\Big| \geq \cdots 
   \geq \Big|f_{\theta(M)}^{\lambda(M)}\Big|
\]
and
\[ \big|f(l_p,0_{-p})\big| = \sum_{i\in \{1,\ldots,M\}}
   \alpha_i \: 1_{X_i}(l_p,0_{-p})
\]
where $X_i:=\{(\lambda(1)_{\theta(1)},0_{-\theta(1)}),\ldots,(\lambda(i)_{\theta(i)},0_{-\theta(i)})\}$, $\alpha_i =
\Big|f_{\theta(i)}^{\lambda(i)}\Big|-\Big|f_{\theta(i+1)}^{\lambda(i+1)}\Big|$
for \linebreak $i\in \{1,\ldots,M-1\}$, and
$\alpha_{M}=\Big|f_{\theta(M)}^{\lambda(M)}\Big|$.

A \emph{$k$-ary bicapacity} is a function 
$F:\cup_{A \subseteq N} \mathcal{D}_{((k-1)_A,0_{-A})}(\mathcal{J}(L))\rightarrow \mathbb{R}$. 

The natural extension $\overline{F}(f)$ is:
\begin{align*}
  \overline{F}(f) = & \sum_{i=1}^{M} \Big(\Big|f_{\theta(i)}^{\lambda(i)}\Big|-\Big|f_{\theta(i+1)}^{\lambda(i+1)}\Big|\Big)
  \times  \\
 & \ \ \ \  \ \ \ \ F\Big(1_{\bigcup_{\substack{q\in\{1,\ldots,i\}\\
  \theta(q)\in A}} (\lambda(q)_{\theta(q)},0_{-\theta(q)})} 
  -1_{\bigcup_{\substack{q\in\{1,\ldots,i\}\\
  \theta(q)\not\in A}} (\lambda(q)_{\theta(q)},0_{-\theta(q)})} \Big),
\end{align*}
with $f^{\lambda(M+1)}_{\theta(M+1)}:=0$.
As before, this could be considered as the Choquet integral of $f$ w.r.t. $F$.  

We consider now a particular class of elements $f$ in $\mathcal{C}(\mathcal{J}(L))$
satisfying for all $i\in N$
\begin{align*}
& |f_i^1| = \cdots = |f_i^{J_i(f)-1}| = 1\\
& |f_i^{J_i(f)}| = z_i\\
& |f_i^{J_i(f)+1}| = \cdots = |f_i^{k-1}| = 0,
\end{align*}
for some given integers $J_1(f),\ldots,J_n(f)$ in
$\{1,\ldots,k-1\}$, and real numbers $z_1,\ldots,z_n\in[0,1]$. 
Let us denote by $\sigma$ a permutation on $N$ such that
\[ z_{\sigma(1)}\geq \cdots \geq z_{\sigma(n)}.
\]
Remark that $f$ belongs to all $M$-dimensional simplices of
$\mathcal{C}(\mathcal{J}(L))$ whose corresponding permutation satisfy:
\begin{align*}
\forall i\in \{1,\ldots,q^f\} , & \quad f_{\theta(i)}^{\lambda(i)}=1\text{ if }
\theta(i)\in A, \text{ and } -1 \text{ otherwise} \\
\forall i\in \{q^f+1,\ldots,q^f+n\}, &  \quad
f_{\theta(i)}^{\lambda(i)}=z_{\sigma(i-q^f)} \text{ if } \theta(i)\in A, \text{
  and } -z_{\sigma(i-q^f)} \text{ otherwise}\\
\forall i\in \{q^f+n+1,\ldots M\}, & \quad  f_{\theta(i)}^{\lambda(i)}=0
\end{align*}
where $q^f=\sum_{i\in N} (J_i(f)-1)$. Then
\begin{equation}
  \overline{F}(f) = (1-z_{\sigma(1)}) V(\varnothing) 
  + \sum_{i=1}^n (z_{\sigma(i)}-z_{\sigma(i+1)})  
   V( \{\sigma(1),\ldots,\sigma(i)\}), 
\label{E3}
\end{equation}
with $z_{\sigma(n+1)}:=0$, and
\[ V(S) := F\Big(1_{\big(X_{q^f} \cup \bigcup_{i\in S} ((J_i(f))_i,0_{-i})\big)
  \cap \mathcal{J}(L)^+} 
  -1_{\big(X_{q^f} \cup \bigcup_{i\in S} ((J_i(f))_i,0_{-i})\big) \cap
   \mathcal{J}(L)^-}\Big),
\]
with $X_{q^f}:=\{(l_i,0_{-i})\mid 1\leq l_i\leq J_i(f)-1\}$.

The positive part of the scale is represented by the positive levels
$\rho_0,\ldots\rho_{k-1}$.  The negative part of the scale is represented by the
negative levels $\rho_{-k+1},\ldots,\rho_0$.  Hence $\rho_0=0$ is the neutral
element demarcarting between attractive and repulsive values.  Let $x\in
[\rho_{-k+1},\rho_{k-1}]^n$ defined by
\[ x_i := \left\{ \begin{array}{l}
  \rho_{J_i(f)-1} + (\rho_{J_i(f)}-\rho_{J_i(f)-1})\times z_i  \ \mbox{ if } i\in A \\
  \rho_{-J_i(f)+1} + (\rho_{-J_i(f)}-\rho_{-J_i(f)+1})\times z_i  \ \mbox{ if } i\not\in A
 \end{array} \right.
\]
Then proceeding as in Section \ref{Sinterpret}, it is easy to see that (\ref{E3})
corresponds exactly to the Choquet integral for $k$-ary bicapacities defined in
\cite{grla03b}.

\subsection{Example}
We end this section by illustrating the above results with $k=3$ and
$n=2$. Clearly, the case $k=2$ was already well described (capacities and
bicapacities). 

Elements in $L:=3^2$ are denoted by pairs $(l_1,l_2)$, with $l_i\in\{0,1,2\}$,
$i=1,2$. We have four join-irreducible elements $(1,0),(2,0),(0,1),(0,2)$. Let
us consider the following function $f$ in $\mathcal{C}(\mathcal{J}(L))$:
\[
f(1,0)=0.5, \quad f(2,0)=0.1, \quad f(0,1)=0.3,\quad f(0,2)=0.2.
\]
Note that $f$ is indeed nonincreasing on $\mathcal{J}(L)$. The associated
permutation is
\[
\pi(1)=(1,0),\quad \pi(2)=(0,1),\quad \pi(3)=(0,2),\quad \pi(4)=(2,0),
\]
and the corresponding maximal chain is (expressed in $L$ for simplicity)
\[
(0,0)<(1,0)<(1,1)<(1,2)<(2,2).
\]
(see Fig. \ref{fig:ex3} for a diagram of $L$, and the maximal chain in bold)
Supposing $F$ being defined on $L$, the Choquet integral of $f$ w.r.t. $F$ is
given by
\begin{multline*}
\overline{F}(f)=[f(1,0)-f(0,1)]F(1,0) + [f(0,1)-f(0,2)]F(1,1)\\
 + [f(0,2)-f(2,0)]F(1,2) + f(2,0)F(2,2).
\end{multline*}

Let us turn to the bipolar case. To avoid a heavy notation, elements of
$\tilde{L}$ are denoted by $(ij,kl)$ instead of $((i,j),(k,l))$. The set of
complemented elements together with their complemented elements is
\begin{align*}
A=\emptyset: \qquad (0,0) & \leftrightarrow (2,2)\\
A=\{1\}: \qquad (2,0) & \leftrightarrow (0,2)\\
A=\{2\}: \qquad (0,2) & \leftrightarrow (2,0)\\
A=\{1,2\}: \qquad (2,2) & \leftrightarrow (0,0)
\end{align*} 
Then $\tilde{L}=L(0,0)\cup L(2,0)\cup L(0,2)\cup L(2,2)$, with
\begin{align*}
L(0,0) & = \{(00,kl)\mid k,l\in\{0,1,2\}\}\\
L(2,0) & = \{(i0,0l)\mid i,l\in\{0,1,2\}\}\\
L(0,2) & = \{(0j,k0)\mid j,k\in\{0,1,2\}\}\\
L(2,2) & = \{(ij,00)\mid i,j\in\{0,1,2\}\}.
\end{align*}
Consider the function $f$ defined by
\[
f(1,0)=0.5, \quad f(2,0)=0.1, \quad f(0,1)=-0.3,\quad f(0,2)=-0.2.
\]
Then $A=\{1\}$, $f\in\mathcal{C}_{(2,0)}(\mathcal{J}(L))$, and the permutation $\pi$ is the
same as above. Now, assuming $F$ defined on $\tilde{L}$ is given,
\begin{multline*}
\overline{F}(f)=[|f(1,0)|-|f(0,1)|]F(10,00) + [|f(0,1)|-|f(0,2)|]F(10,01)\\
 + [|f(0,2)|-|f(2,0)|]F(10,02) + |f(2,0)|F(20,02).
\end{multline*}
Fig. \ref{fig:ex3} shows the bipolar extension $\tilde{L}$, the part $L(2,0)$ used
for $f$ is in grey, and the maximal chain is in bold.
\begin{figure}[htb]
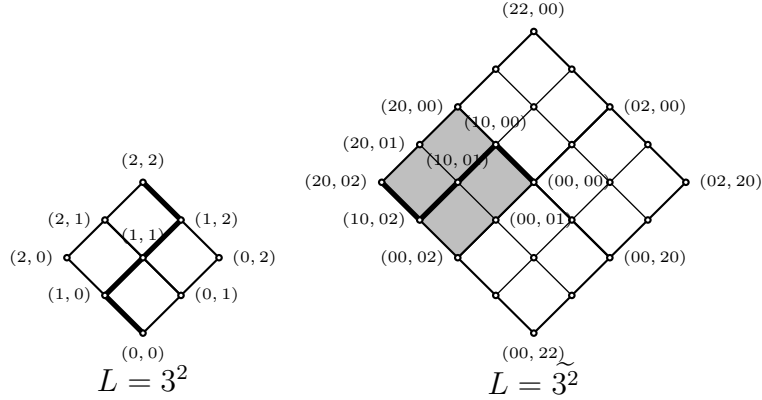

\begin{center}
\psset{unit=0.5cm}
\pspicture(0,0)(4,5)
\pspolygon(2,2)(1,3)(2,4)(3,3)
\pspolygon(2,2)(1,3)(0,2)(1,1)
\pspolygon(2,2)(1,1)(2,0)(3,1)
\pspolygon(2,2)(3,3)(4,2)(3,1)
\psline[linewidth=2pt](2,0)(1,1)(3,3)(2,4)
\pscircle[fillstyle=solid](2,2){0.1}
\pscircle[fillstyle=solid](1,1){0.1}
\pscircle[fillstyle=solid](2,0){0.1}
\pscircle[fillstyle=solid](3,1){0.1}
\pscircle[fillstyle=solid](0,2){0.1}
\pscircle[fillstyle=solid](4,2){0.1}
\pscircle[fillstyle=solid](1,3){0.1}
\pscircle[fillstyle=solid](3,3){0.1}
\pscircle[fillstyle=solid](2,4){0.1}
\uput[90](2,2){\tiny $(1,1)$}
\uput[180](1,1){\tiny $(1,0)$}
\uput[-90](2,0){\tiny $(0,0)$}
\uput[0](3,1){\tiny $(0,1)$}
\uput[180](0,2){\tiny $(2,0)$}
\uput[0](4,2){\tiny $(0,2)$}
\uput[180](1,3){\tiny $(2,1)$}
\uput[0](3,3){\tiny $(1,2)$}
\uput[90](2,4){\tiny $(2,2)$}
\rput(2,-1.2){$L=3^2$}
\endpspicture
\hspace*{2cm}
\psset{unit=0.5cm}
\pspicture(0,0)(8,9)
\pspolygon(4,0)(6,2)(4,4)(2,2)
\pspolygon(6,2)(8,4)(6,6)(4,4)
\pspolygon(4,4)(6,6)(4,8)(2,6)
\pspolygon[fillstyle=solid,fillcolor=lightgray](2,2)(4,4)(2,6)(0,4)
\psline[linewidth=0.5pt](3,1)(7,5)
\psline[linewidth=0.5pt](1,3)(5,7)
\psline[linewidth=0.5pt](1,5)(5,1)
\psline[linewidth=0.5pt](3,7)(7,3)
\psline[linewidth=2pt](4,4)(3,5)(1,3)(0,4)
\pscircle[fillstyle=solid](4,0){0.1}
\pscircle[fillstyle=solid](3,1){0.1}
\pscircle[fillstyle=solid](5,1){0.1}
\pscircle[fillstyle=solid](2,2){0.1}
\pscircle[fillstyle=solid](4,2){0.1}
\pscircle[fillstyle=solid](6,2){0.1}
\pscircle[fillstyle=solid](1,3){0.1}
\pscircle[fillstyle=solid](3,3){0.1}
\pscircle[fillstyle=solid](5,3){0.1}
\pscircle[fillstyle=solid](7,3){0.1}
\pscircle[fillstyle=solid](0,4){0.1}
\pscircle[fillstyle=solid](2,4){0.1}
\pscircle[fillstyle=solid](4,4){0.1}
\pscircle[fillstyle=solid](6,4){0.1}
\pscircle[fillstyle=solid](8,4){0.1}
\pscircle[fillstyle=solid](1,5){0.1}
\pscircle[fillstyle=solid](3,5){0.1}
\pscircle[fillstyle=solid](5,5){0.1}
\pscircle[fillstyle=solid](7,5){0.1}
\pscircle[fillstyle=solid](2,6){0.1}
\pscircle[fillstyle=solid](4,6){0.1}
\pscircle[fillstyle=solid](6,6){0.1}
\pscircle[fillstyle=solid](3,7){0.1}
\pscircle[fillstyle=solid](5,7){0.1}
\pscircle[fillstyle=solid](4,8){0.1}
\uput[-90](4,0){\tiny $(00,22)$}
\uput[0](6,2){\tiny $(00,20)$}
\uput[0](8,4){\tiny $(02,20)$}
\uput[0](6,6){\tiny $(02,00)$}
\uput[90](4,8){\tiny $(22,00)$}
\uput[180](2,6){\tiny $(20,00)$}
\uput[180](0,4){\tiny $(20,02)$}
\uput[180](2,2){\tiny $(00,02)$}
\uput[0](4,4){\tiny $(00,00)$}
\uput[90](2,4){\tiny $(10,01)$}
\uput[90](3,5){\tiny $(10,00)$}
\uput[180](1,5){\tiny $(20,01)$}
\uput[180](1,3){\tiny $(10,02)$}
\uput[0](3,3){\tiny $(00,01)$}
\rput(4,-1.2){$L=\widetilde{3^2}$}
\endpspicture

\end{center}
\caption{The lattice $L=3^2$ and its bipolar extension. In bold the maximal
  chain corresponding to $f$}
\label{fig:ex3}
\end{figure}

\section{Concluding remarks}
We have provided a general scheme for the bipolarization of a class of posets,
precisely of inf-semilattices. The bipolarization is particularly simple in the
case of a finite distributive lattice $L$, where all connected components of the
poset of join-irreducible elements have a least element (this is the case, e.g., for
Boolean lattices and products of linear lattices). In this case, the
bipolarization of $L$ is made from copies of $L$, and is called for this reason
a regular mosaic. 

Using the concepts of geometric realization of a distributive lattive and of
natural triangulation, we have provided a general interpolation scheme on
bipolar structures, which can be considered as a general definition of the
Choquet integral. 

We have applied our general scheme to multicriteria decision making, where we
have shown that our model reduces to the Choquet integral for $k$-ary
capacities, in the special case where the scores assigned to levels for each
criterion are either 0 or 1, except on one level. We have also provided an
interpretation pertaining to fuzzy set theory when no such restriction exists on
the scores. 

\bibliographystyle{plain}

\bibliography{../BIB/fuzzy,../BIB/grabisch,../BIB/general}

\end{document}